\documentclass[twocolumn,pre,amsmath,amssymb,showpacs,aps]{revtex4-1}

\usepackage{amsfonts} 
\usepackage{graphicx, subfigure}
\usepackage{color}
\usepackage{amsmath}
\usepackage{amssymb}
\usepackage{verbatim}
\usepackage{amsthm}
\usepackage{booktabs}
\usepackage{multirow}

\newtheorem{theorem}{Theorem}[section]
\newtheorem{lemma}[theorem]{Lemma}

\setlength{\heavyrulewidth}{0.08em}


\begin{document}



\preprint{APS/123-QED}

\title{The influence of the network topology on epidemic spreading}







\author{Daniel Smilkov}
\affiliation{Macedonian Academy for Sciences and Arts, Skopje, Macedonia \\
E-mail: dsmilkov@cs.manu.edu.mk}

\author{Ljupco Kocarev}
\affiliation{Macedonian Academy for Sciences and Arts, Skopje, Macedonia \\
 BioCircuits Institute, University of California, San Diego \\
 9500 Gilman Drive, La Jolla, CA 92093-0402 \\ E-mail: lkocarev@ucsd.edu}


\begin{abstract} 
The influence of the network's structure on the dynamics of spreading processes has been extensively studied in the last decade. Important results that partially answer this question show a weak connection between the macroscopic behavior of these processes and specific structural properties in the network, such as the largest eigenvalue of a topology related matrix.
However, little is known about the direct influence of the network topology on microscopic level, such as the influence of the (neighboring) network on the probability of a particular node's infection. To answer this question, we derive both an upper and a lower bound for the probability that a particular node is infective in a susceptible-infective-susceptible model for two cases of spreading processes: reactive and contact processes. The bounds are derived by considering the $n-$hop neighborhood of the node; the bounds are tighter as one uses a larger $n-$hop neighborhood to calculate them. Consequently, using local information for different neighborhood sizes, we assess the extent to which the topology influences the spreading process, thus providing also a strong macroscopic connection between the former and the latter. Our findings are complemented by numerical results for a real-world e-mail network. A very good estimate for the infection density $\rho$ is obtained using only 2-hop neighborhoods which account for 0.4\% of the entire network topology on average.

\end{abstract}

\pacs{89.75.Hc, 02.50.Ey, 87.19.X-}

\maketitle

\section{Introduction} 
Complex network theory has opened the way for exploring many dynamical processes on large-scale systems consisting of individual components connected in a nontrivial topology. One of the most widely studied phenomena occurring on complex networks are spreading processes, with a prominent example attracting widespread attention being the spread of viruses in social or computer networks \cite{ref-3, ref-4, ref-5, ref-6, ref-7, ref-2008}.

There are several approaches being used in the analysis of epidemic spreading. One popular approach is the heterogeneous mean-field (HMF) prescription by coarse-graining nodes within degree classes and relaxing the problem by assuming that all nodes in a degree class have the same dynamical properties \cite{ref-4,ref-5, hmf1, hmf2}. However, it has been shown that HMF can result in different levels of accuracy \cite{hmf-accuracy}.
%
%
A more successful approach in determining the outcome of an infection was introduced by Chakrabarti et.al~\cite{ref-8} where the SIS epidemic model was analyzed by using a system of probability equations, which in fact, represents a deterministic non-linear dynamical system (NLDS).
%
This approach was also used in \cite{Gomez-2010}, where a family of SIS epidemic models is examined, parameterized by the number of stochastic contact trials per unit time, that range from contact processes (where the contagion expands at a certain rate from an infective vertex to one neighbor at a time) to reactive processes (in which an infective individual effectively contacts all its neighbors to expand the epidemics). Using a deterministic model, referred to as the Microscopic Markov-Chain approach (MMCA), which is virtually equivalent to NLDS, the whole phase diagram of the different infection models is constructed and their critical properties are determined. It is worth noting that using different number of stochastic contagion per unit time extends the usability of the model, since this number can surely vary for different real-world problems \cite{bounded-contact}. Recently, a mixed approach using both NLDS and HMF was proposed in \cite{npHMF} which lead to a nonperturbative formulation enhancing the predictive power of the classical HMF approach.
Heterogeneous environments have also been extensively studied. One such is an epidemic model with inhomogeneous infection probabilities on a graph with prescribed degree distribution \cite{ref-2011} where model's dynamics are derived for i.i.d. weights and for weights that are functions of the degrees.

With the help of these theoretical frameworks, the role of network topology in the spreading process has been repeatedly emphasized, yielding the result of a finite threshold for the spreading process in networks with exponentially bounded degree distributions, and a vanishing threshold in infinite uncorrelated networks with a power-law degree distribution. A recent addition to these findings is that for the SIS epidemic model, the vanishing threshold has nothing to do with the scale-free nature of the degree distribution, but is the result of the largest hub being a self-sustainable source for the infection \cite{psv-SIS} (see also \cite{reinfection}). However, the currently established connections are very rough with a topology-related threshold differentiating between two extreme outcomes of the model. With the threshold being satisfied,  there is still a large spectrum for different spreading parameters and the poorly understood role of the network's topology there motivated our work.
%

In this paper we adopt the approach proposed in \cite{ref-8, Gomez-2010} and study the deterministic epidemic model on graphs, in which the dynamics of individual nodes is described by a discrete-time Markov chain.
In the SIS model, a node can be in one of two states: susceptible (S) or infective (I). Infective nodes can infect other neighbouring nodes, and each node can be randomly cured with probability $\delta$ per unit time. At each time step, an infective node makes a number of trials per unit time to transmit the disease to its neighbours with probability $\beta$. We consider two specific cases: (i) the contact process, which involves a single stochastic contagion per infective node per unit time, and (ii) the reactive process, which involves as many stochastic contagions per unit time as neighbours a node has. 
The work in this paper extends that of \cite{ref-8,Gomez-2010}. We derive upper and lower bounds on the probability of a node to be infective, and determine how tight the bounds are around the probability that a node is infective. For both processes the bounds are derived using the $n-$hop neighbourhood of each node. The larger the considered neighbourhood -- the more topological information one uses to determine the bounds, hence the bounds are tighter. We use the difference between the upper and lower bound averaged over all nodes to determine the influence of the network topology on the spreading process and compute numerical results for a real-world e-mail network. For additional clarity, Figure \ref{fig:plotedgraphs} depicts the $1$-hop and $2$-hop neighborhood of a particular node in the Enron e-mail network with degree 10, together with the calculated bounds for its probability of infection derived using only its respective subgraph information.

The outline of the paper is the following. Section \ref{sec:2} gives the definition of the model and recovers known results. The contributions of this paper are contained in sections \ref{sec:3} and \ref{sec:4}. In section \ref{sec:3} the upper and lower bounds on the probability of being infective are derived for the reactive process, and numerical results for the e-mail network are presented. Section \ref{sec:4} gives the bounds for the contact process, along with the corresponding numerical results. Section \ref{sec:conc} concludes the paper and points out future research directions.

\begin{figure}[htbp]
  \begin{center}
   \subfigure[1-hop neighborhood]{
   \label{fig:plotedgraphs-1}
   \includegraphics[scale=0.3]{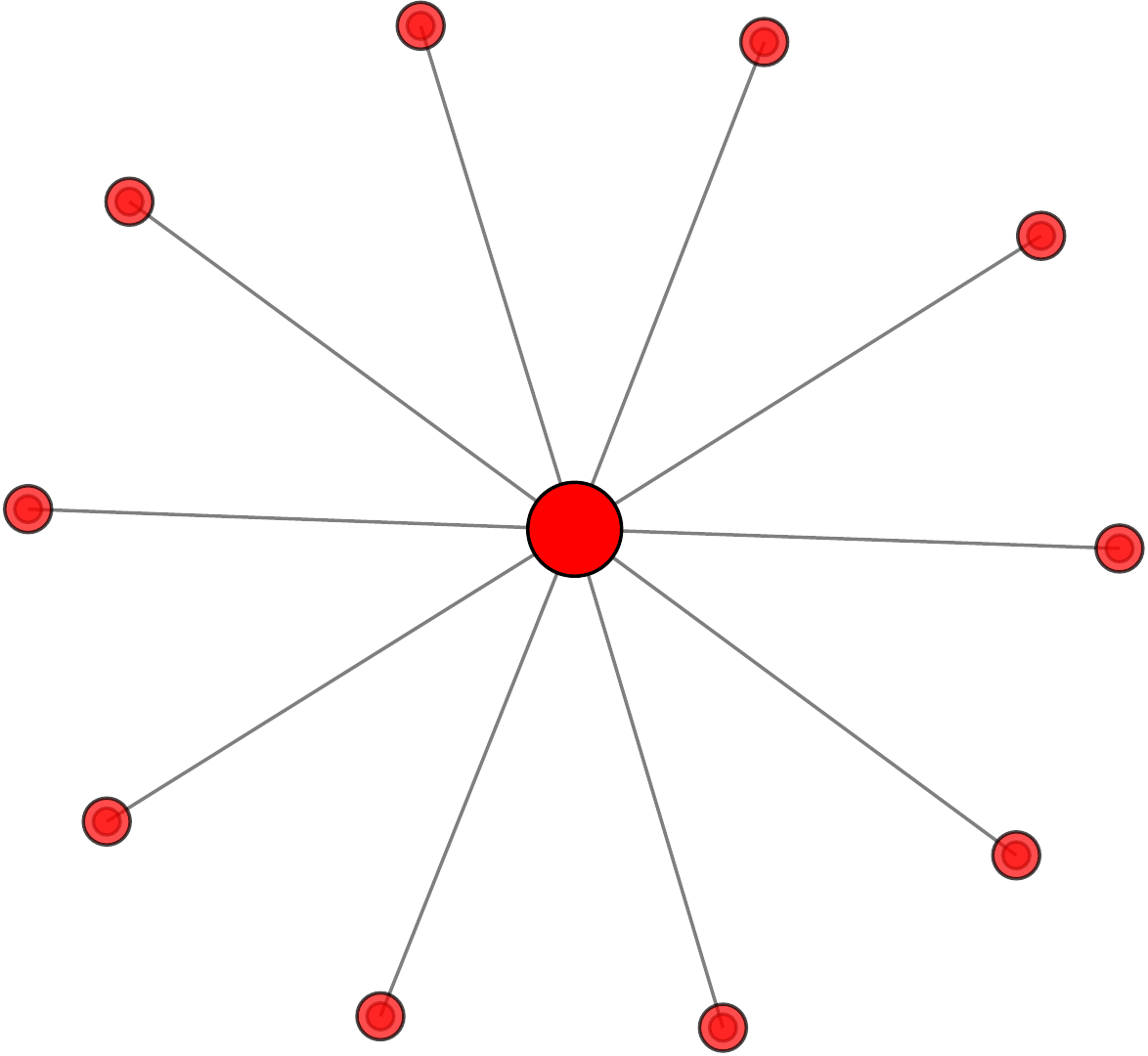}
  }
   \subfigure[2-hop neighborhood]{
   \label{fig:plotedgraphs-2}
   \includegraphics[scale=0.5]{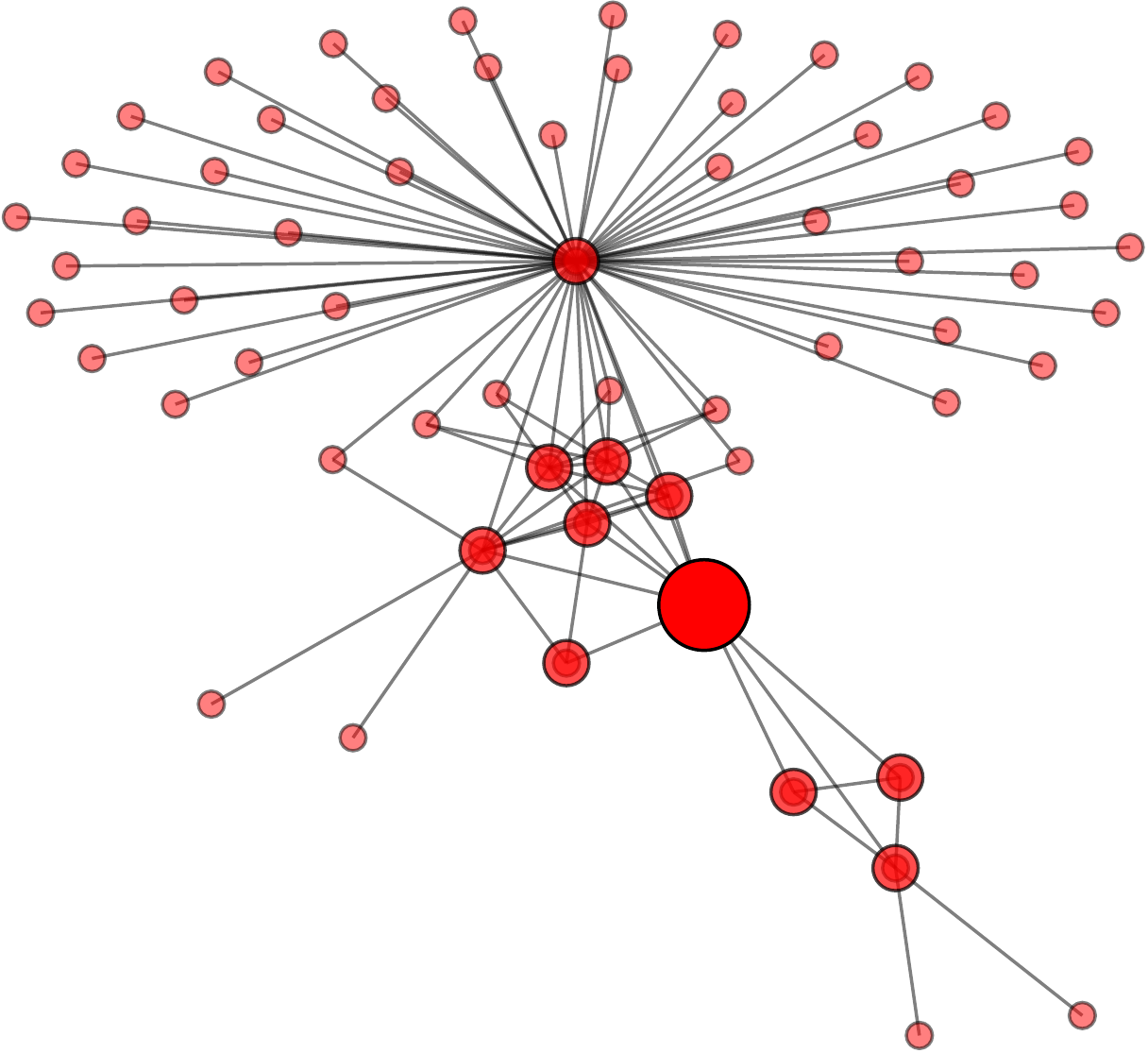}
  }
  \end{center}
  \caption{1-hop and 2-hop neighborhood for a node extracted from a real-world e-mail network with 33696 nodes. The node (largest in size) has 10 direct neighbors (medium sized). The probability of infection for the given node obtained after simulating a particular configuration of the SIS model was 0.373. \subref{fig:plotedgraphs-1} The 1-hop neighborhood represents a tree with the given node at its center and 10 peripheral nodes. The probability of infection for the node given the 1-hop neighborhood (node's degree) is calculated to be between 0.006 and 0.503. \subref{fig:plotedgraphs-2} The 2-hop neighborhood contains 62 nodes and 92 edges. Peripheral nodes are smallest in size and are two hops away from the central node. The probability of infection for the node given the 2-hop neighborhood topology is calculated to be between 0.297 and 0.416. Note that the difference between the upper and lower bound gets smaller as we use more topology information.}
  \label{fig:plotedgraphs}
\end{figure}

\section{Model definition and analysis}
\label{sec:2}

Consider a closed population of $N$ individuals, connected in a network structure which is represented by a simple, undirected,  unweighted, connected and unipartite graph $G = (V,E)$ with node set $V$ and edge set $E$. The adjacency matrix of the graph is given by $A = [ a_{ij} ]_{N \times N}$, where $a_{ij} = 1$ if node $i$ is connected to node $j$, and $a_{ij} = 0$ otherwise. 
Each node can be in one of two possible states: susceptible (S) and infective (I). Susceptible nodes are healthy and can contract the disease upon contact with infective nodes, which spread the disease. After the infectious period of the disease has ended, a node becomes susceptible to the disease once again. The initial set of infective nodes at time 0 is assumed to be non-empty, and all other nodes are assumed to be in state S at time 0.

The state of a node is represented by a status vector, an indicator vector containing a single 1 in the position corresponding to the present state, and 0 in the other $\mathbf{s}_i(t)=[s_i^S(t) \hspace{0.2cm} s_i^{I}(t)]^T,$
for all $i \in \{1, \ldots, N \}$. Let
$
\mathbf{p}_i(t)=[p_i^S(t) \hspace{0.2cm} p_i^{I}(t)]^T
$
be the probability mass function (PMF) of node $i$ at time $t$. 
The evolution of SIS is described by the following equations:
\begin{equation}
\begin{array}{lll}
p_i^S(t+1) &=& s_i^S (t) (1 - f_i(t)) + \delta s_i^I(t) \\
p_i^{I}(t+1) &=& s^S_{i}(t) f_i(t) + (1 - \delta) s_i^I(t)
\label{eq:stochastic-SIS}
\end{array}
\end{equation}
and
\begin{equation}
\mathbf{s}_i(t+1) = MultiRealize[\mathbf{p}_i(t+1)],
\label{eq:Multirealize}
\end{equation}
where $MultiRealize[\cdot]$ performs a random realization for the PMF given with $\mathbf{p}_i (t+1)$. In (\ref{eq:stochastic-SIS}) $0 \leq \delta \leq 1$ is the probability of curing and $0 \leq \beta \leq 1$ is the probability of disease transmission from an infective to a susceptible node.

We consider two cases of infection spreading: the contact process and the reactive process. The contact process \cite{contact-1,contact-2,contact-3} is a dynamical process that involves a single stochastic contagion per infective node per unit time, while in the reactive process \cite{reactive-1,reactive-2,reactive-3} there are as many stochastic contagions per unit time as there are neighbours to a node. The distinction between the two processes is reflected in the probability $f_i(t)$ that a susceptible node $i$ receives the infection from any combination of its infective neighbours. The probability $f_i(t)$ has the form:
\begin{equation}
f_i(t) = 1 - \prod_{j=1}^N (1 - \beta r_{ij} s_j^I(t)).
\label{eq:fi-stoc}
\end{equation}
where $r_{ij}$ is a contact probability. Without loss of generality, it
is instructive to think of these probabilities as the transition probabilities of random walkers on the network. The general case is represented by $\lambda_i$ random walkers leaving node $i$ at each time step:
$$
r_{ij} = 1 - \left( 1 - \frac{a_{ij}}{\sum_j a_{ij}}   \right)^{\lambda_i}
$$
The contact process corresponds to a model dynamics of one contact per unit time, $\lambda_i = 1$, $\forall i$, thus $r_{ij} = a_{ij}/\sum_j a_{ij}$. 
In the reactive process all neighbors are contacted, which corresponds, in this description, to set the limit $\lambda_i \to \infty$, $\forall i$,  resulting on $r_{ij} = a_{ij}$.

Though exact and realistic, the system of equations (\ref{eq:stochastic-SIS}) is not suitable for the analytical study of the system dynamics, since the new statuses are obtained as a result of a decision process, transforming a continuous variable into a discrete one. That is why, in the further text, we will complement the status dependent system, with a set of adequate probability equations. This approach was introduced by Chakrabarti et.al~\cite{ref-8}, who analyzed the  infection in the network using a system of probability equations, referred to as the Non-Linear Dynamical System (NLDS) model. Adopting their approach to the SIS process (\ref{eq:stochastic-SIS}), we obtain the following set of difference equations for the probabilities of states S and I:
\begin{equation}
\begin{array}{lll}
p_i^S(t+1) &=& p_i^S (t) (1 - f_i(t)) + \delta p_i^I(t) \\
p_i^{I}(t+1) &=& p^S_{i}(t) f_i(t) + (1 - \delta) p_i^I(t)
\label{eq:det-SIS}
\end{array}
\end{equation}
where $f_i(t)$ is now
\begin{equation}
f_i(t) = 1 - \prod_{j=1}^N (1 - \beta r_{ij} p_j^I(t)).
\label{eq:fi-det}
\end{equation}
Note that (\ref{eq:det-SIS}) is a deterministic equation.

Since $p_i^S(t) + p_i^I(t) = 1$ for all $i$ and all $t$, we rewrite (\ref{eq:det-SIS}) using $x_i = p_i^I$:
\begin{equation}
x_i(t+1) = (1 - x_i(t))f_i(t) + (1-\delta) x_i(t).
\label{eq:xi-SIS}
\end{equation}
Equation (\ref{eq:xi-SIS}) represents a nonlinear dynamical system $F : [0,1]^N \to [0,1]^N$.  The system (\ref{eq:xi-SIS}) has two fixed points: the origin $x_i = 0, \forall i \in \{1, \ldots, N \}$ and let $x_i^*(G)$ be the fixed point of (\ref{eq:xi-SIS}) different from the origin for the graph $G$.  
We will write only $x^*_i$ instead of $x^*_i(G)$ when it is clear which graph $G$ is considered in the context. At the stationary state: 
\begin{equation}
\delta x^*_i = (1 - x^*_i) \left[ 1 - \prod_{j=1}^N (1 - \beta r_{ij} x^*_j)  \right]. 
\label{eq:xi-stat}
\end{equation}

The origin $x_i = 0, \forall i \in \{1, \ldots, N \}$ is a fixed point of the system. Using the Jacobian matrix of the system (\ref{eq:xi-SIS}) evaluated at the origin:
$$
DF|_{(x_i = 0)} = (1 - \delta) I + \beta R,
$$
where $R=[r_{ij}]_{N \times N}$, one finds the well-known result~\cite{ref-8,Gomez-2010} that the origin is stable when 
\begin{equation}
\frac{\beta}{\delta} < \frac{1}{\lambda_{1,R} },
\label{eq:stability-cond}
\end{equation} 
where $\lambda_{1,R}$ is the largest eigenvalue of the matrix $R$. Whenever the infection to cure ratio $\beta / \delta$ is greater than the network threshold $1 / \lambda_{1,R}$ the disease will reach an endemic state in the network. For a contact process $\lambda_{1,R}=1$, since $R$ is a row stochastic matrix, while for a reactive process $\lambda_{1,R} = \lambda_{1,A}$.  
Moreover, when $\beta \neq 0$, $\delta \neq 0$, and $\delta \neq 1$,
the ergodicity of the Markov chains describing the SIS dynamics of each node is guaranteed and therefore (\ref{eq:det-SIS}) has a unique globally stable fixed point. Therefore, there exists a critical value of $\beta$, $\beta_{c}=\delta/\lambda_{1,R}$, such that the origin is a globally asymptotically stable fixed point of (\ref{eq:xi-SIS}) if $\beta < \beta_{c}$, and $x_i^*$ for all $i$ is a globally asymptotically stable fixed point of (\ref{eq:xi-SIS}) when $\beta > \beta_{c}$.

\section{Reactive process} 
\label{sec:3}

\subsection{Upper bounds on the probability of being infective} 

In this section we consider a family $\Phi$ of all possible simple and connected graphs with at least two nodes (we exclude from this family the empty graph and the graph with a single node and no links) and SIS reactive processes on this family for which the stationary solution (\ref{eq:xi-stat}), different from the origin, is an asymptotically stable fixed point of (\ref{eq:xi-SIS}). For the reactive process,  since $r_{ij}=a_{ij}$, we rewrite (\ref{eq:xi-stat}) as:
\begin{equation}
\label{eq:xi-reactive}
x_i^* = \frac{\left[ 1 - \prod_{j=1}^N (1 - \beta a_{ij} x_j^*)  \right]}{\left[ 1 - \prod_{j=1}^N (1 - \beta a_{ij} x_j^*)  \right] + \delta}
\end{equation}

Our first observation which acted as a building block for deriving the bounds for contact and reactive processes was that the stationary probability of infection of the reactive model (\ref{eq:xi-reactive}) for all nodes $i$, is bounded by
\begin{equation}
\label{eq:bound-reactive}
x_i^* < \frac{1}{1+\delta} \equiv u_i^0.
\end{equation}
This is formally stated in lemma \ref{lemma-main} in Appendix \ref{app-reactive}.


Note that the bound (\ref{eq:bound-reactive}) is independent of the specific network topology. Its right-hand side corresponds to the stationary solution (\ref{eq:xi-reactive}) for an infinitely large full-mesh graph.
Bound (\ref{eq:bound-reactive}) is rough and uses no information about the topology. A better bound can be obtained if one considers the degree of node $i$; in this case, we have:    
\begin{equation}
\label{eq:bound-reactive-1}
x_i^* < \frac{1 - \left [ 1 - \frac{\beta}{1 + \delta} \right ]^{k_i}}{1 - \left [ 1 - \frac{\beta}{1 + \delta} \right ]^{k_i}+\delta} \equiv u_i^1 < u_i^0
\end{equation}
where $k_i$ is the degree of node $i$. 
In general, one can find progressively better bounds for $x_i^*$ by using more information on the graph topology. In fact, let 
\begin{equation} \label{eq:rec-for-u}
 u_i^n  =  \frac{1 - \prod_{j=1}^N (1 - \beta a_{ij} u_j^{n-1})}{1 - \prod_{j=1}^N (1 - \beta a_{ij} u_j^{n-1}) + \delta}
\end{equation}
where
\begin{equation*}
u_i^0 = 1/(1+\delta).
\end{equation*}
Then $x_i^*$ is bounded by
\begin{equation}
\label{eq:bound-reactive-all}
x_i^* < \ldots < u_i^{n} < \ldots < u_i^1 < u_i^0
\end{equation}
for all $i$. For a formal definition and proof see theorem \ref{theorem-reactive-all} in Appendix \ref{app-reactive}.

Using the similar arguments as in the proof of the theorem (\ref{theorem-reactive-all}), it can be shown that $\lim_{n\to \infty} u_i^n = x_i^*$ for all $i$. In this paper we are interested only for small $n$. A similar theorem to the theorem (\ref{theorem-reactive-all}) can also be proved for lower bounds but only for those SIS processes for which $\beta > \delta$. The obvious lower bound is $x_i^*>0$, but replacing 0 in a recurrent relation similar to the one in (\ref{eq:rec-for-u}) will produce only 0s.  
Appendix \ref{app-reactive} contains the theorem (\ref{theorem-reactive-all-1}) for lower bounds of $x_i^*$:   
\begin{equation}
\label{eq:bound-reactive-all-2}
L_i^0 \leq L_i^1 \leq \ldots \leq L_i^{n} < \ldots \leq  x_i^* 
\end{equation}
for all $i$, which is analogous to theorem (\ref{theorem-reactive-all}), and the bounds $L_i^n$ are defined as
\begin{equation*}
L_i^n  =  \frac{1 - \prod_{j=1}^N (1 - \beta a_{ij} L_j^{n-1})}{1 - \prod_{j=1}^N (1 - \beta a_{ij} L_j^{n-1}) + \delta} 
\end{equation*}
where
\begin{equation*}
L_i^0 = 1 - \delta/\beta.
\end{equation*}
Note that the left-hand side of (\ref{eq:bound-reactive-all-2}) is defined only for $\beta>\delta$, since $x_i^*>0$. This property comes from (\ref{eq:stability-cond}) since the graph associated with $L_i^0=x_{min}^*$ is a path graph of size $2$ with $\lambda_{1,G}=1$. $L_i^n$ for all $n$ are also going to be defined only for $\beta > \delta$, since the a priori assumption is that the peripheral nodes have no probability of being infected. In order to obtain bounds for $\beta<\delta$, we take a different approach described in the following subsection.


\subsection{Lower bounds on the probability of being infective} 

In the previous section we have derived upper bounds which are valid for all $\beta$ and $\delta$ and lower bounds valid only for the SIS processes for which $\beta > \delta$. Since this is a restriction, in this section we find lower bounds valid for all $\beta$ and $\delta$ by observing that if $G'=(V',E')$ is a subgraph of $G=(V,E)$, with $x_i^{*}$ and $x_i^{*'}$ being the stationary solution of (\ref{eq:xi-reactive}) for the graph $G$ and $G'$ respectively for an arbitrary node $i \in V\cap V'$, then $x_i^{*'} < x_i^{*}$. In other words, as we remove edges (and nodes) from a graph, the probability of infection will decrease for each (remained) node. This is stated formally in lemma \ref{lemma-reactive-lower} in Appendix \ref{app-reactive}.
Using this interesting property, we can derive lower bound for an arbitrary node by simply obtaining (numerically) the stationary solution of (\ref{eq:xi-reactive}) for the 1-hop neighborhood starting at node $i$. Then $x_i^{*'}$ is a lower bound for $x_i^{*}$:
\begin{equation}
 \label{eq:bound-reactive-lower}
 l_i^1 \equiv x_i^{*'}\leq x_i^*
\end{equation}
where
\begin{equation}
 x_i^{*'} = \frac{1 - \left[ 1 - \frac{\beta^2 x_i^{*'}}{\beta x_i^{*'} + \delta} \right]^{k_i}}{1 - \left[ 1 - \frac{\beta^2 x_i^{*'}}{\beta x_i^{*'} + \delta} \right]^{k_i} + \delta}
\end{equation}
is the stationary solution of (\ref{eq:xi-reactive}) for the hub (central node) of a star graph $G'$ with $k_i+1$ nodes, $k_i$ being the degree of node $i$.


Note that bound (\ref{eq:bound-reactive-lower}) unlike bound (\ref{eq:bound-reactive}) uses 1-hop topology information (the degree of the node) a priori, thus avoiding the problem when $\beta<\delta$. Consequently, one can find progressively better lower bounds for node $i$ by solving (\ref{eq:xi-reactive}) for different subgraphs of $G$.

To show this, we now define a class of subgraphs called a $p$-hop neighborhood.  
Let $i$ be an arbitrary node of the graph $G=(V,E)$, $i \in V$, and let $n_i = \max_{x} l(i,x)$ where $l(i,j)$ is the length of the shortest path between nodes $i$ and $j$. Let $V_i^0=\{ i \}$. We define a subgraph $G_i^p= (V_i^p, E_i^p)$ of $G=(V,E)$ as follows: 
\begin{eqnarray*}
V_i^p &=& \{ x | x\in V, 0\leq l(i,x) \leq p \} \\
E_i^p &=& \{ (x,y)| (x,y) \in E, x\in V_i^p, y\in V_i^{p-1} \},  
\end{eqnarray*}
where $p=1, \ldots, n_i + 1$. We say that $G_i^p$ is a $p$-hop neighborhood of node $i\in V$. (see Figure \ref{fig:subgraphs}). 
For example, $G_i^1=(\{i\}\cup V_i ,E_i)$, where $E_i$ is the set of edges adjacent to node $i$, and $V_i$ is the set of all neighbors of $i$. In fact, $G_i^1$ is a star graph with $k_i$ leaves and root $i$. Note that $G_i^{n_i + 1}$ is the entire graph $G$ and that the first triangle can occur in $G_i^2$ but not in $G_i^1$.

\begin{figure}
\centering
\includegraphics[scale = 0.45]{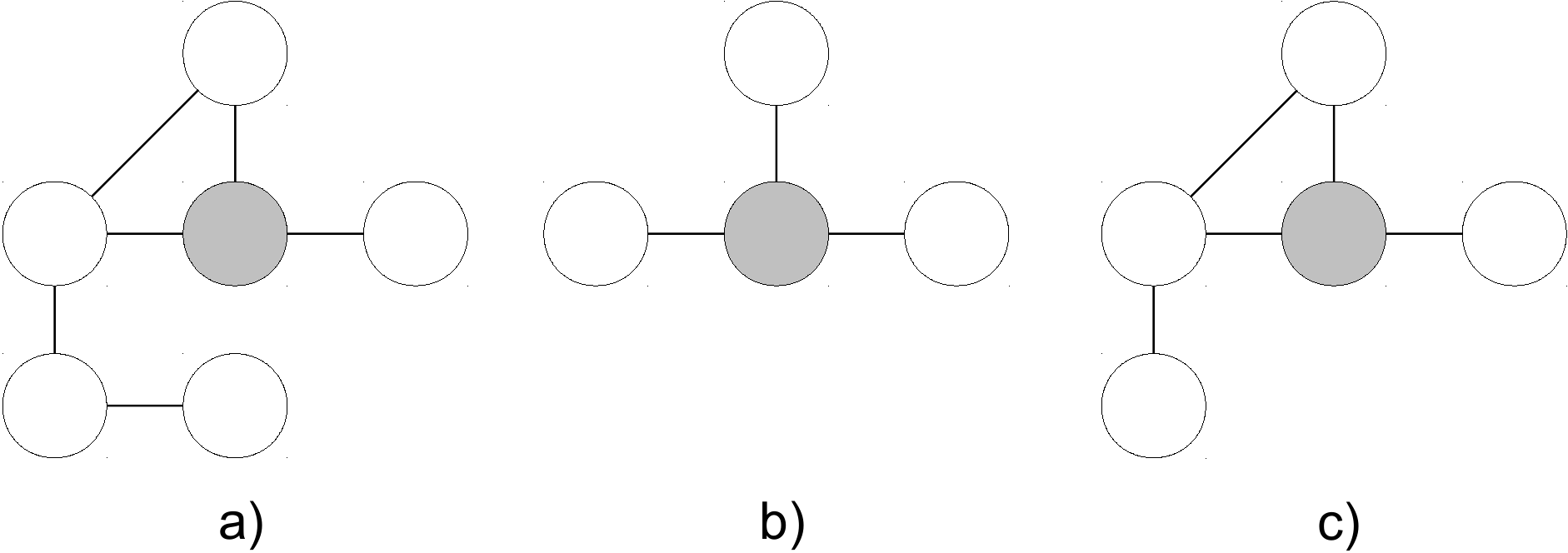}
\caption{b) and c) $1$-hop and $2$-hop neighborhood of the gray node extracted from the graph in a)}
\label{fig:subgraphs}
\end{figure}
Finally, if $l_i^p$ is the probability of infection of node $i$ given its $p$-hop neighborhood, by theorem \ref{theorem-reactive-lower-all} proven in Appendix \ref{app-reactive}, the probability of infection $x_i^*$ given the entire graph $G$ is bounded by
\begin{equation}
 \label{eq:bound-reactive-lower-all}
l_i^1 < l_i^2 < \ldots < l_i^{n_i+1} = x_i^*
\end{equation}
where $n_i+1$ is such that $E_i^{n_i + 1}=E$, i.e. the $n_i + 1$-hop neighborhood of node $i$ contains the entire graph $G$.

\subsection{Numerical results}

In the previous section we have proved that 
$$
l_i^p \leq x_i^* < u_i^n 
$$
for $i=1, \ldots N$, $p=1, \ldots n_i+1$, and $n=1,2, \ldots $. Note that only when $p=n_i+1$,  $l_i^p = x_i^*$; otherwise $l_i^p < x_i^*$. 
The bounds $l_i^1$ and $u_i^1$ are obtained by considering only (first) neighbors of $i$. The bound $u_i^1$ depends on the degree of the node $i$, that is, the information contained in the 1-hop neighborhood of $G$ extracted by starting at node $i$, while for the bound $l_i^1$ one computes the SIS model on the subgraph $G_i^1$, which is the subgraph of neighbors of $i$. 
In a similar fashion, the bounds $l_i^2$ and $u_i^2$ are obtained by considering second neighbors of $i$ (neighbors of the first neighbors). 
The bound $u_i^2$ can be computed by using $u_j^1$ for all neighbors $j$ of $i$.  Thus, $u_i^2$ reflects the topology of 2-hop neighborhood of $G$ extracted by starting at node $i$. Finally, for $n=n_i +1$, since $G_i^{n_i + 1}$ is the entire graph $G$, $u_i^{n_i+1}$ takes into account the topology of the whole network. 
Therefore, it makes sense to calculate the difference $d_i^p = u_i^p - l_i^p$, when $n=p$, and for small values of $p$.   
In this way, one could, at least numerically, answer one of the basic questions in mathematical epidemiology for any graph: what is the influence of the graph topology on disease spreading, or more precisely, on the probability that given node will be infected?

Lower bounds are derived as stationary solution of the SIS process for the corresponding subgraphs. On the other hand, upper bounds are found by back-propagation using the equation (\ref{eq:rec-for-u}). As a consequence,  when $p=n_i+1$, $l_i^{n_i+1} =x_i^*$ while $u_i^{n_i+1} > x_i^*$ and thus $d_i^{n_i+1} > 0$. 
In fact, see remark 3.3, only when $\lim_{n \to \infty} \left(  u_i^n \right) - l_i^{n_i+1}   = 0$.

In this section we study the Enron e-mail network obtained from \cite{jure-networkdata}, running (\ref{eq:xi-SIS}) on the network. The Enron e-mail network has 33696 nodes and 361622 edges with $\lambda_{1,A} = 118.4177$, and $\beta_{c} =  0.004222$ when 
$\delta=0.5$ for the reactive process. We study the upper and lower bounds of the expected density of infection $\rho = \sum_i x_i^* / N$ calculated as
\begin{eqnarray*}
  \hat{\rho_p} = \sum_{i=1}^N \frac{u_i^p}{N} \\
  \breve{\rho_p} = \sum_{i=1}^N \frac{l_i^p}{N} \\
\end{eqnarray*}
for different values of $p$, as well as the average difference $\Delta \rho_p$ between the upper and lower bound for all nodes, 
$$
\Delta \rho_p = \sum_{i=1}^N \frac{d_i^p}{N}.
$$
We also calculate $d_i^p$ for 3 nodes: the node with minimum degree, the node with maximum degree, and a node with average degree. To have a better idea of how much local information is being used, Table \ref{table:avgsize} depicts the size of a $p$-hop neighborhood for the Enron e-mail network, $\left| E^p \right|$ , as measured by the number of edges in the corresponding subgraph averaged over all nodes $i$, as well as its fraction of the total number of edges in the network.

\begin{table}[htbp]
\caption{Average size of $p$-hop neighborhood for the Enron e-mail network. $\left| E^p \right|$ is an average of $\left| E_i^p \right|$ over all nodes $i$ and $\left| E \right|$ is the total number of edges in the network.}
\label{table:avgsize} \centering
\begin{ruledtabular}
\begin{tabular}{ccc}
$p$ & $\left| E^p \right|$ & $\left| E^p \right| / \left| E \right|$  \\ \toprule[0.1em]
1 & 10 & 0.0003 \\ 
2 & 1538 & 0.004 \\
3 & 45067 & 0.125 \\
4 & 207496 & 0.574 \\ 
\end{tabular}
\end{ruledtabular}
\end{table}

\begin{figure}[htbp]
\centering
\includegraphics[scale = 0.47]{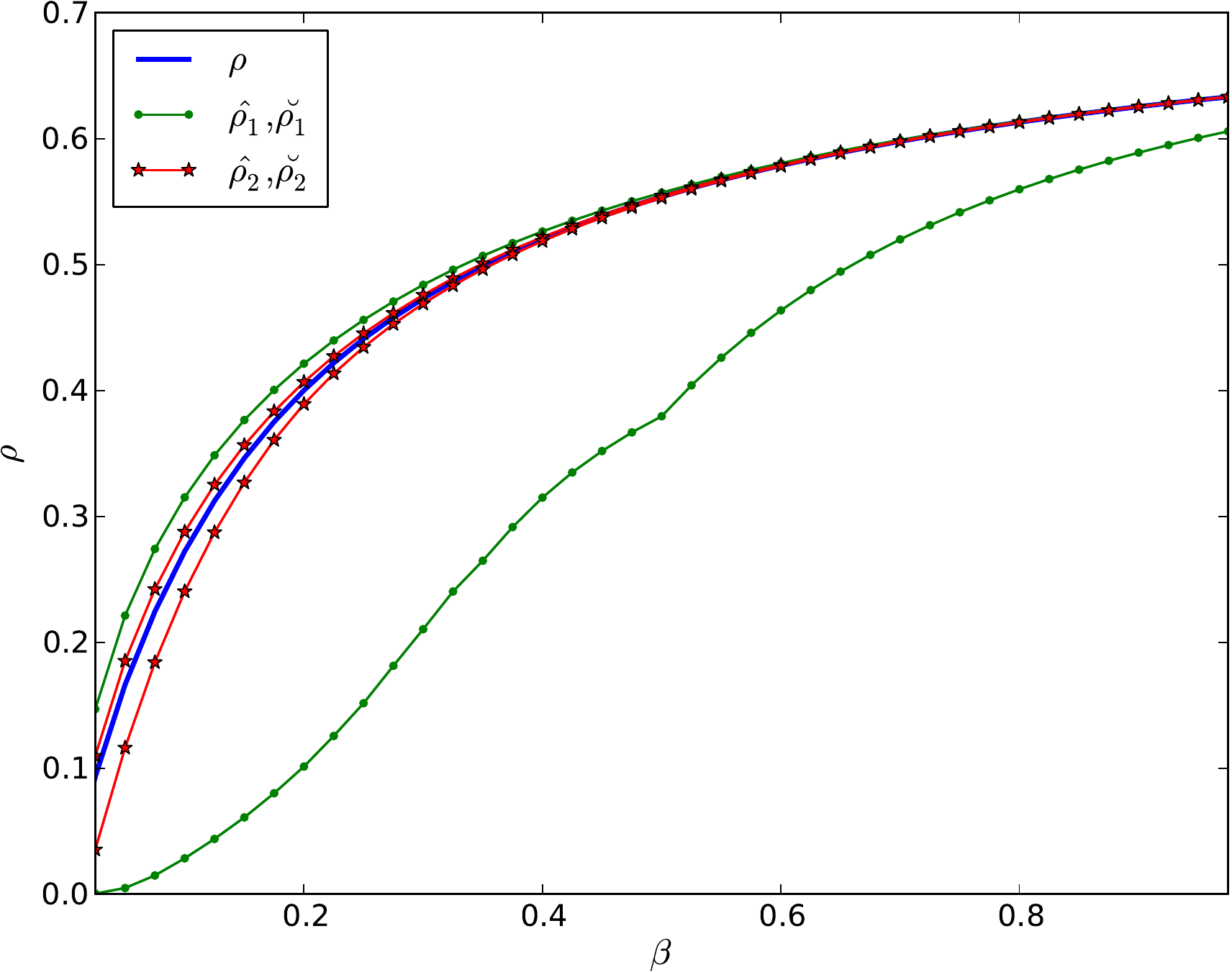}
\caption{The density of infective nodes in the Enron e-mail network as the transmission parameter $\beta$ is varied, and $\delta = 0.5$, obtained by simulating (\ref{eq:xi-reactive}) on the network until the model stabilizes, along with the upper and lower bounds, $\hat{\rho_p}$ and $\breve{\rho_p}$, on $\rho$ using  1-hop and 2-hop topology information.}
\label{fig:reactive}
\end{figure}

Figures \ref{fig:reactive} and \ref{fig:gap-reactive} summarize our results.
As depicted in Figure \ref{fig:reactive}, the bounds are surprisingly tight even when only 2-hop topology information is being used. More precisely, we obtain a very good estimate for the infection density $\rho$ by summing over node-level estimates which are using only $0.4\%$ of the network topology on average (see Table \ref{table:avgsize}). The upper bound $\hat{\rho_1}$ is also surprisingly tight given that it only uses the node's degrees while having no information for the edges in the network.

Figure \ref{fig:gap-reactive} shows that the average difference $\Delta \rho_p$ between the bounds decreases as one considers 1-hop, 2-hop, and 3-hop topology information, as expected since the bounds around the stationary infection density $\rho = \sum_i x_i^* / N$ become tighter. Also the spreading becomes less topology dependent as the disease transmission parameter $\beta$ increases. 
When $\beta$ is close to $\beta_c$, $x_i^*$ and consequently $\rho$, are close to zero as well. Therefore, as $\beta$ approaches $\beta_c$, it is expected that the value of $x_i^*$ is influenced by the whole network. 
For the Enron network, as indicated in Figure \ref{fig:gap-reactive}, the average difference between the bounds calculated from 3-hop neighborhood is close to zero, $\Delta \rho_3 \leq 0.012$ for all $\beta$. Additionally, when $\beta > 0.4$, topology of only 2-hops away is relevant for the spreading process, $\Delta \rho_2 \leq 0.01$. The bounds calculated using only 1-hop topology, i.e. the nodes' degrees are wide apart for all values of $\beta$ for the particular network, indicating that the specific degrees are highly influential in the spreading process.

In this light, we examine the difference $d_i^p$ between the bounds for three randomly chosen nodes with particular degree: one with minimum degree, one with average degree, and one with maximum degree. Note that results vary greatly for the three types of nodes. The difference $d_i^p$ is smallest for the node with minimum degree reaching a maximum of 0.028 for all $\beta$ given only 2-hop topology information (0.4\% of the entire network topology on average). It is also worth noting that the lower bound is very close to the actual result, with the difference being due to the upper bound requiring more information to converge.
Interestingly, while the gap for the node with maximum degree quickly decreases with the increase of $\beta$, it can be large for specific values of $\beta$. In contrast to the minimum degree case, for the node with maximum degree, the difference is due to the lower bound not having converged, while the upper bound is quite tight. Another interesting result for the node with maximum degree is that as $\beta$ gets greater than 0.04, knowing only the neighbors of the node's neighbors suffices for predicting the outcome of the infection.
For the node with average degree, we observe two interesting results. Firstly, the difference $d_i^p$ exists for a relatively wide span of $\beta$ (as in the minimum degree case). The other result is that for some specific values of $\beta$, $d_i^p$ can be relatively large (as in the maximum degree case). However, the difference $d_i^p$ is smaller than 0.03 for all values of $\beta$ given the 3-hop topology information which constitutes approximately 12.5\% of the total network topology on average.

%
Finally, for all nodes, the bounds on the probability of being infective are tighter as $\beta \to 1$, and looser as $\beta \to \beta_c$. The conclusion from this is that as $\beta \to \beta_c$, network topology plays a bigger role in the dynamics of the spreading process.

\begin{figure*}[htbp]
\centering
\includegraphics[scale = 0.5]{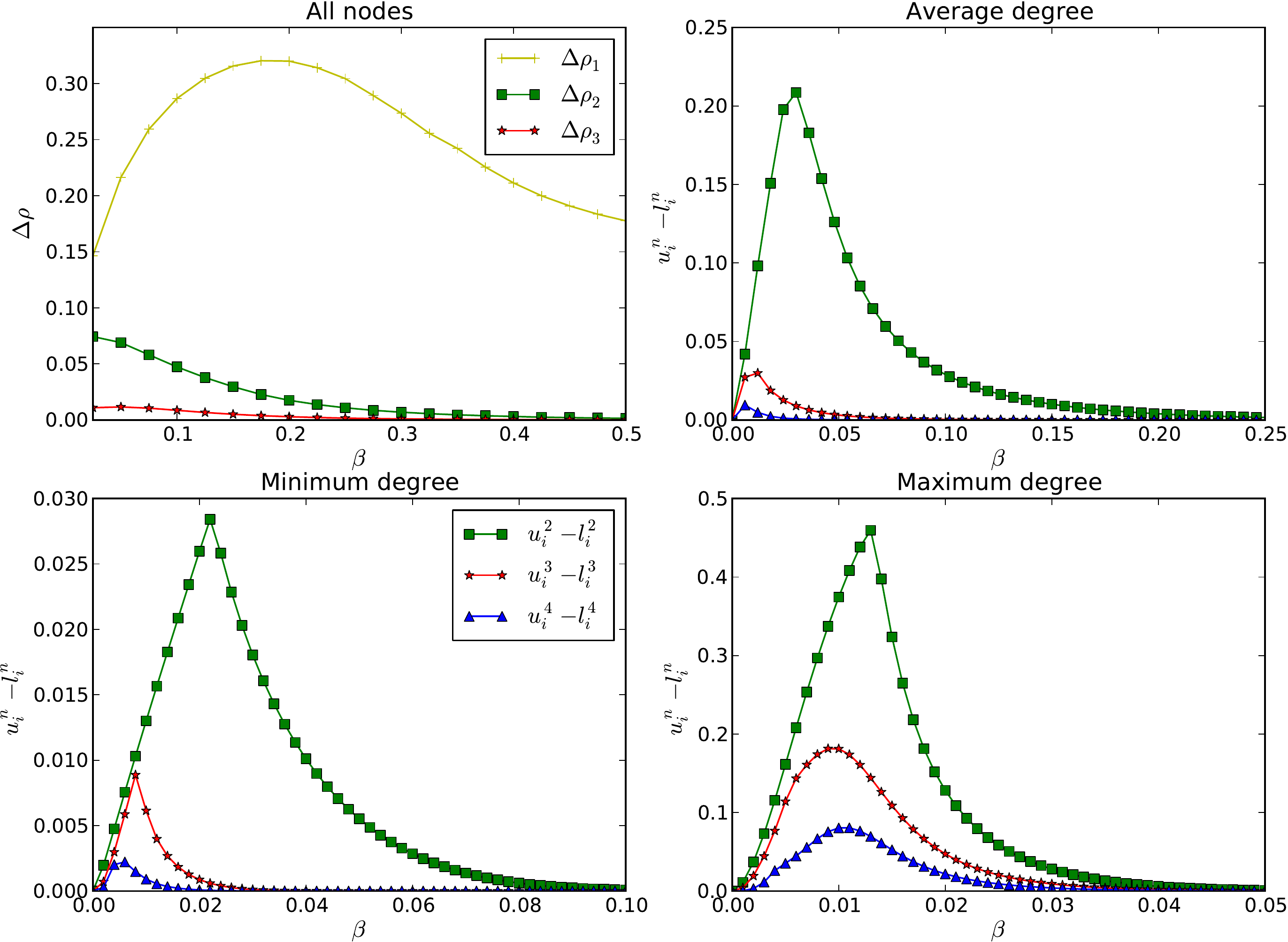}
\caption{The average difference $\Delta \rho_p$ between the upper and lower bound (top left) and the difference between the upper and lower bounds that use 2-hop, 3-hop and 4-hop topology information for a node $i$ in the Enron e-mail network that has minimum (bottom left), average (top right) and maximum degree (bottom right) as the transmission parameter $\beta$ is varied and $\delta = 0.5$.}
\label{fig:gap-reactive}
\end{figure*}


\section{Contact process}
\label{sec:4}

\subsection{Bounds on the probability of being infective}

For the contact process, since $r_{ij}=\frac{a_{ij}}{\sum_k{a_{ik}}}$, we rewrite (\ref{eq:xi-stat}) as:
\begin{equation}
\label{eq:xi-contact}
x_i^* = \frac{\left[ 1 - \prod_{j=1}^N (1 - \beta \frac{a_{ij}x_j^*}{\sum_k a_{ik}})  \right]}{\left[ 1 - \prod_{j=1}^N (1 - \beta \frac{a_{ij}x_j^*}{\sum_k a_{ik}})  \right] + \delta}
\end{equation}

From lemma \ref{lemma-contact} and \ref{lemma-contact2} (see Appendix \ref{app-contact}) we have the following bounds for the stationary solution of (\ref{eq:xi-contact})
\begin{equation}
\label{eq:bound-contact}
l_i^0 \equiv \frac{1 - e^{-\beta x_i^*}}{1 - e^{-\beta x_i^*} + \delta} < x_i^* \leq 1 - \frac{\delta}{\beta} \equiv u_i^0
\end{equation}
Note that the bound (\ref{eq:bound-contact}) is independent of the specific network topology. Its left-hand and right-hand side correspond to the stationary solution of (\ref{eq:xi-contact}) for an infinitely large full-mesh graph and a path graph of size 2 respectively.

Similarly to the reactive process, better bounds can be obtained if we use a node's degree:
\begin{eqnarray}
\label{eq:bound-contact-1}
l_i^1 & \equiv & \frac{1 - \left [ 1 - \beta x_p \right ]^{k_i}}{1 - \left [ 1 - \beta x_p \right ]^{k_i}+\delta} < x_i^* \leq \nonumber \\
 & \leq & \frac{1 - \left [ 1 - \beta + \delta \right ]^{k_i}}{1 - \left [ 1 - \beta + \delta \right ]^{k_i}+\delta} \equiv u_i^1
\end{eqnarray}
where $k_i$ is the degree of node $i$ and $x_p$ is the solution of the equation:
\begin{equation*}
x=\frac{1 - e^{-\beta x}}{1 - e^{-\beta x} + \delta}
\end{equation*}
More generally, by using $n$-hop neighborhoods, the stationary solution of (\ref{eq:xi-contact}) for an arbitrary node $i$, $x_i^*$, is bounded by,
\begin{equation}
\label{eq:bound-contact-all}
l_i^0 < l_i^1 < \ldots < l_i^n  <  x_i^* \leq u_i^n \leq \ldots \leq u_i^1 \leq u_i^0
\end{equation}
where
\begin{eqnarray*}
 l_i^n  =  \frac{1 - \prod_{j=1}^N (1 - \beta r_{ij} l_j^{n-1})}{1 - \prod_{j=1}^N (1 - \beta r_{ij} l_j^{n-1}) + \delta} \mbox{ and } \\
 u_i^n  =  \frac{1 - \prod_{j=1}^N (1 - \beta r_{ij} u_j^{n-1})}{1 - \prod_{j=1}^N (1 - \beta r_{ij} u_j^{n-1}) + \delta}
\end{eqnarray*}
and
\begin{equation*}
l_i^0 = \frac{1 - e^{-\beta x_i^*}}{1 - e^{-\beta x_i^*} + \delta} \mbox{ and } u_i^0 = 1 - \frac{\delta}{\beta}
\end{equation*}
For a formal definition and proof see theorem \ref{theorem-contact-all} in Appendix \ref{app-contact}. Note that here, unlike in the reactive process, the problem with $u_i^0 = 1 - \delta/\beta$ when $\delta>\beta$ is avoided since $\beta_c > \delta$.
\subsection{Numerical results} 

Again, it makes sense to calculate the difference $d_i^n = u_i^n - l_i^n$ between the upper and lower bound derived by using n-hop topology information for node $i$ to determine the dependence of the contact process on the specific network topology. 

Figures \ref{fig:contact} and \ref{fig:gap-contact} show the numerical results. As in the reactive process, we study the upper bound $\hat{\rho_p}$ and lower bound $\breve{\rho_p}$ on $\rho$, as well as the average difference between the upper and lower bounds $\Delta \rho_p$. We also study $d_i^p$ for 3 nodes: the node with minimum degree, a node with average degree, and the node with maximum degree. Note that $\beta_c = 0.5$ when $\delta = 0.5$. 

In general, the contact process is less dependent on the network topology than the reactive process, as the largest value of $\Delta \rho_1$ is an order of magnitude less than the corresponding value for the reactive process. The more topology information is included in the calculation of the bounds, the difference between them decreases. Also, the probabilities that each of the 3 examined nodes is infective are equally dependent on the network topology, since $d_i^n$ for $n = 1,2,3$ are similarly valued. Contrary to the reactive process, the bounds on the probability of being infective are tighter as $\beta \to \beta_c$ (see Figure \ref{fig:contact}). In the Appendix \ref{app-limit}, we show that when $\beta$ is close to $\beta_{c}= \delta$, the probabilities of being infective have an analytical solution in closed form, they are no longer topology dependent, and are functions only of the spreading process parameters $\beta$ and $\delta$.

\begin{figure}[htbp]
\centering
\includegraphics[scale = 0.45]{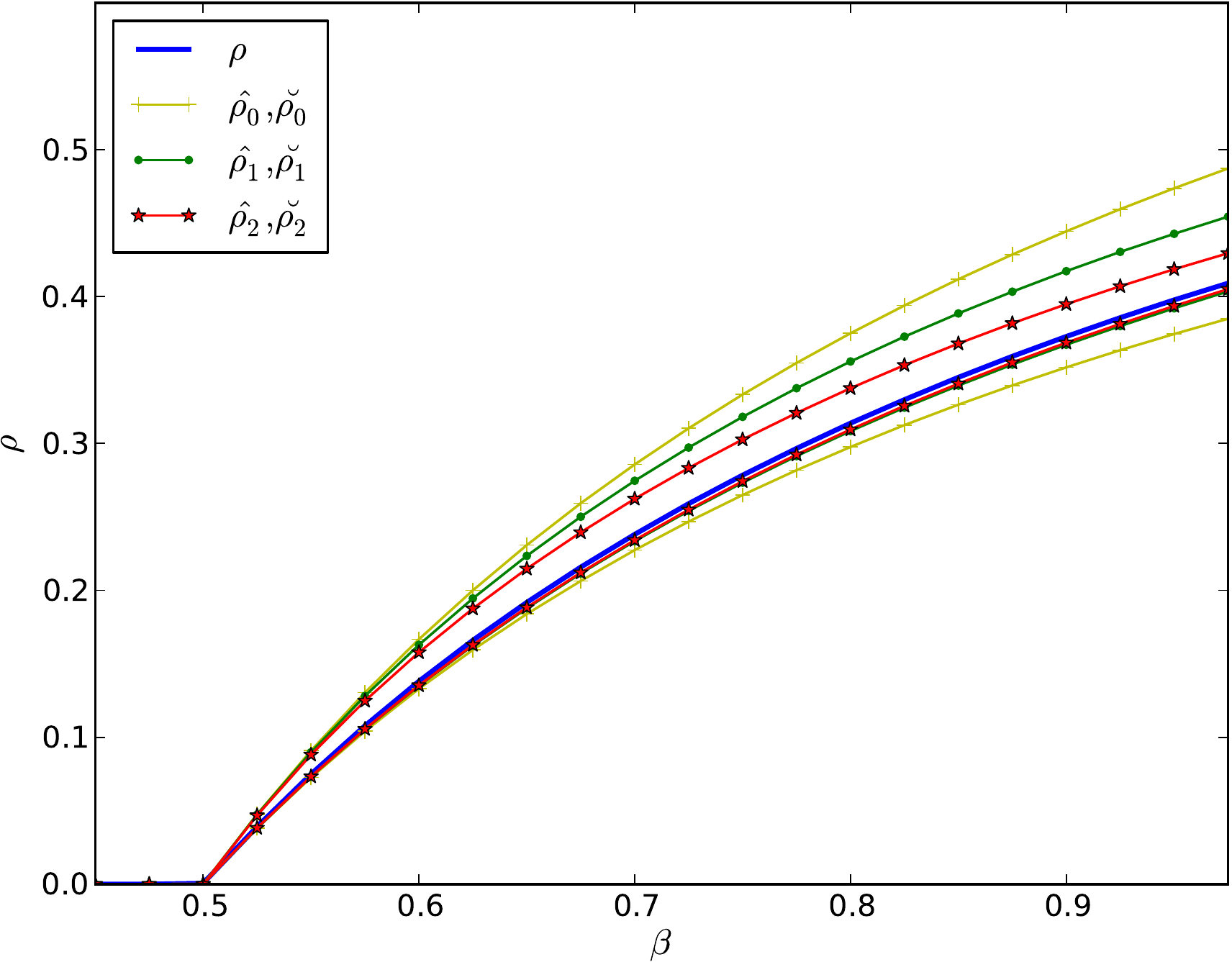}
\caption{The expected infection density $\rho$ in the endemic state for the Enron e-mail network for the contact process as $\beta$ is varied, and $\delta=0.5$. The bounds on $\rho$ calculated with no topology information $\hat{\rho_0}$, $\breve{\rho_0}$, 1-hop topology information $\hat{\rho_1}$, $\breve{\rho_1}$, and 2-hop topology information $\hat{\rho_2}$, $\breve{\rho_2}$ are depicted as well.}
\label{fig:contact}
\end{figure}

\begin{figure*}[htbp]
\centering
\includegraphics[scale = 0.4]{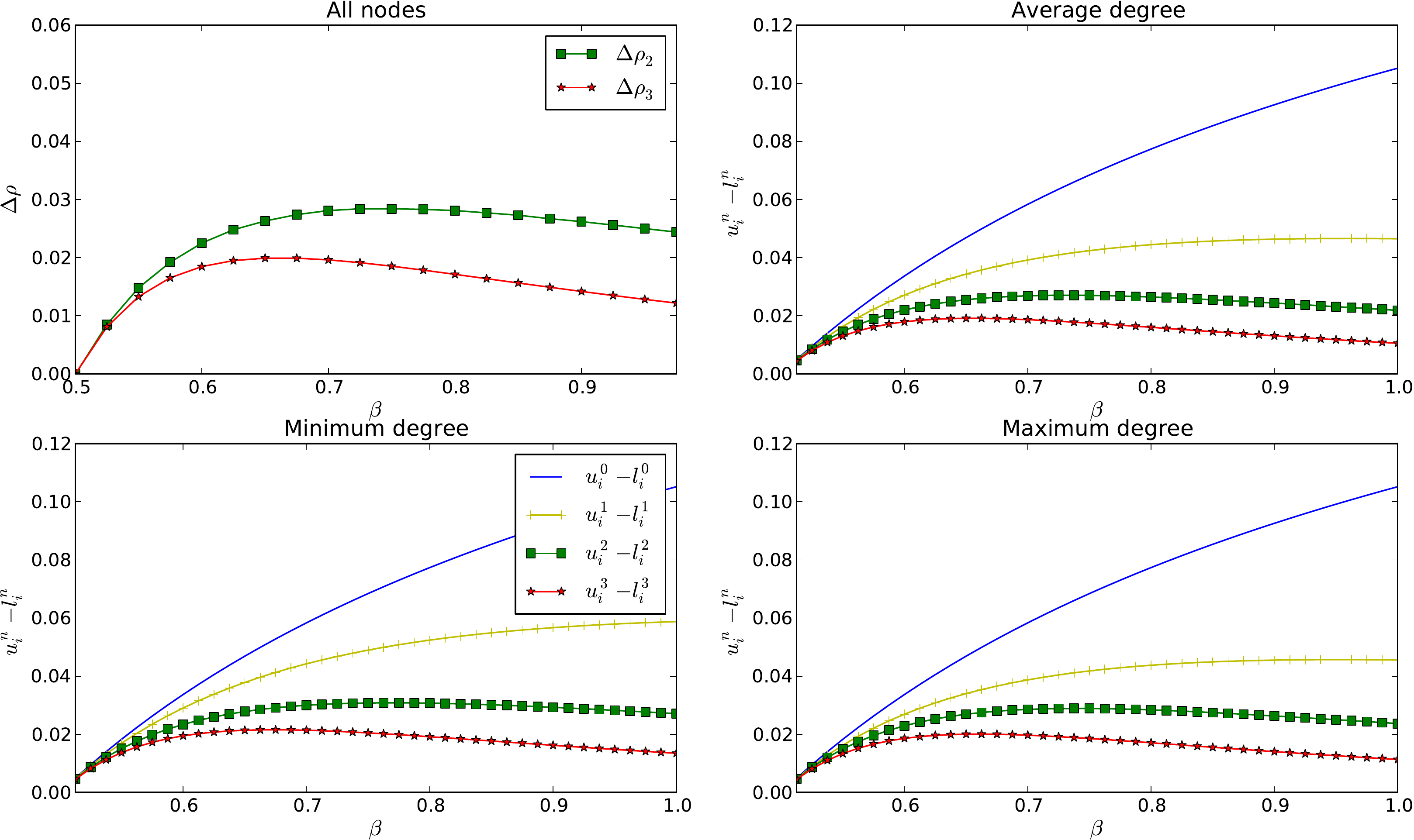}
\caption{The average difference $\Delta \rho_n$ between the upper and lower bound (top left) and the difference between the upper and lower bounds that use 0-hop, 1-hop, 2-hop and 3-hop topology information for a node $i$ in the Enron e-mail network that has minimum (bottom left), average (top right) and maximum degree (bottom right) as the transmission parameter $\beta$ is varied and $\delta = 0.5$.}
\label{fig:gap-contact}
\end{figure*}

\section{Conclusions}  
\label{sec:conc}

In this paper we have derived the upper and lower bounds on the probability that a node is infective for the SIS model of infection spreading on networks, where the behavior of a node is modeled with a discrete-time Markov chain SIS model. We have considered the reactive and the contact process as two cases of the spreading process. For both processes we use the difference between upper and lower bounds on microscopic level to assess the dependence of the spreading processes on network topology. Numerical results are given on the Enron e-mail network. For both processes, the bounds are progressively better as one considers a larger $n-$hop neighborhood of a node. For the reactive process, both bounds on the probability that a node is infective are tighter as $\beta \to 1$ and their difference is largest for nodes with average degree. Conversely, the bounds on the probability that a node is infective for the contact process are tighter as $\beta \to \beta_c$. 

One  of the main implications of the paper is that if $\beta$ is larger than its critical value (when $\beta$ is close to $\beta_c$ the probability of a node to be infective is anyway close to zero), one can estimate the probability of being infective using only local information (considering only $n-$hop local topology, for small $n$), without knowing the whole network. Consequently, from this local information one can also estimate the density of being infective on the whole network, as well as assess the extend to which the topology affects the outcome of the infection on macroscopic level.

The results of this paper are easily extendable to other ergodic models (such as SIRS, for example) and are related to all types of spreading (idea, failure, rumor) \cite{rumor-model, GoffmanNewill, infoblogspace}, regardless on the type of the spread agent. How these results can be extended to SIR model by considering SIRS model and taking one of its parameters to approach zero (or one) so that SIRS model in this limit approaches SIR model is a question for further research.     

\appendix
\section{Bounds for reactive process}
\label{app-reactive}

\begin{lemma}
\label{lemma-main}
Let $\Phi$ be the family of all possible simple and connected graphs $G=(V,E)$ with $\left| V \right| \geq 2$. Let $\mathbf{x}^*(G)=[x_1^* x_2^* \ldots x_N^*]$ be the stationary solution (\ref{eq:xi-stat}) different from the origin for the graph $G \in \Phi$, $G=(V,E)$, where $N=\left|V\right|$. 
Let $x_{max}^*=\max_{G\in \Phi}\max_i{\mathbf{x}^*(G)}$. Then for all $i$, $x_i^*$ is bounded by
\begin{equation*}
x_i^* < \frac{1}{1+\delta} \equiv u_i^0.
\end{equation*}
\end{lemma}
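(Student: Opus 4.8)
The plan is to reduce everything to the monotonicity of the scalar map $g(y)=y/(y+\delta)$. For $\delta>0$ this map is strictly increasing on $[0,\infty)$ and satisfies $g(y)<1$ for every $y\ge 0$, with $g(1)=1/(1+\delta)$. The stationary equation (\ref{eq:xi-reactive}) says precisely that $x_i^*=g(S_i)$ where $S_i=1-\prod_{j=1}^N(1-\beta a_{ij}x_j^*)$, so the whole problem is to show $S_i<1$.

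First I would record the non-strict bound. The fixed point under consideration lies in $[0,1]^N$ by construction, so $0\le x_j^*\le 1$ for all $j$, hence $0\le \beta a_{ij}x_j^*\le 1$, each factor $1-\beta a_{ij}x_j^*$ lies in $[0,1]$, and the finite product lies in $[0,1]$ as well, giving $0\le S_i\le 1$. Applying the increasing map $g$ yields $x_i^*=g(S_i)\le g(1)=1/(1+\delta)$. In particular every coordinate already satisfies $x_j^*\le 1/(1+\delta)<1$, where $\delta>0$ is used for the last inequality.

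The only real content is upgrading $\le$ to $<$, and this is where a small amount of care is needed. Using $x_j^*<1$ from the previous step together with $\beta a_{ij}\le 1$, we get $\beta a_{ij}x_j^*<1$, so each factor $1-\beta a_{ij}x_j^*$ is strictly positive; a finite product of strictly positive numbers is positive, so $\prod_{j=1}^N(1-\beta a_{ij}x_j^*)>0$ and therefore $S_i<1$ strictly. Monotonicity of $g$ then gives $x_i^*=g(S_i)<g(1)=1/(1+\delta)$, as claimed, and since the argument applies to every node of every $G\in\Phi$ it also shows $x_{max}^*<1/(1+\delta)$. I do not anticipate a genuine obstacle here: the proof is two applications of monotonicity plus the remark that a finite product of numbers in $(0,1]$ cannot vanish; the one thing not to skip is establishing $x_j^*<1$ (which needs $\delta>0$) before concluding the product is nonzero. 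One could alternatively phrase the argument around $x_{max}^*$ directly — bounding $S_i$ by $1-(1-\beta x_{max}^*)^{k_i}$ and using $1-\beta x_{max}^*>0$ — but that route has to contend with whether the maximum over the infinite family $\Phi$ is attained, which the direct argument sidesteps.
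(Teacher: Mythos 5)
Your proof is correct, and it takes a genuinely different route from the paper's. The paper argues via an extremal-node (maximum-principle) argument: it assumes the maximum $x_{max}^*$ over the whole family $\Phi$ is attained at some node $k$, shows by contradiction (using positivity of $\partial x_k/\partial x_j$) that every neighbor of $k$ must also attain $x_{max}^*$, reduces (\ref{eq:xi-reactive}) to a scalar self-consistency equation in the degree $n$ of $k$, and lets $n\to\infty$ to obtain $1/(1+\delta)$ as the supremal value, realized by the infinite complete graph. Your argument instead writes $x_i^*=g(S_i)$ with $g(y)=y/(y+\delta)$ and $S_i=1-\prod_j(1-\beta a_{ij}x_j^*)$, gets the non-strict bound from $S_i\le 1$, and then upgrades to strictness because $x_j^*<1$ forces every factor of the finite product to be positive, hence $S_i<1$. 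This is more elementary and, frankly, more airtight: it sidesteps two soft spots in the paper's proof, namely whether the maximum over the infinite family $\Phi$ is attained at all (the supremum $1/(1+\delta)$ is in fact only approached, never achieved), and the informal treatment of $x_{max}^*$ as a differentiable function of the integer $n$. What the paper's route buys in exchange is the identification of the extremal configuration and a template that generalizes immediately to the degree-dependent bound (\ref{eq:bound-reactive-1}) and to Lemma \ref{lemma-contact} for the contact process; your route recovers those just as easily, though, since $x_j^*<1/(1+\delta)$ plugged back into $S_i$ gives $S_i\le 1-(1-\beta/(1+\delta))^{k_i}$. Two minor caveats: you correctly flag that $\delta>0$ is needed (the lemma is vacuous or false at $\delta=0$, but the paper's standing assumptions exclude that case), and your closing remark that the argument ``also shows $x_{max}^*<1/(1+\delta)$'' is slightly at odds with your own earlier observation --- a pointwise strict bound over an infinite family only yields $x_{max}^*\le 1/(1+\delta)$ when $x_{max}^*$ is read as a supremum. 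Neither affects the stated conclusion, which is the per-node strict inequality.
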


\begin{proof}
Let $\Delta$ be the set of neighbors of the node associated with the value $x_{max}^*$. 
We will show that $x_j^*=x_{max}^*$ for all $j\in \Delta$ by using contradiction. Let the node associated with the value $x_{max}^*$ be node $k$, i.e., $x_k^*=x_{max}^*$. Assume that $x_j^*=x_{max}^*$ for all $j\in \Delta$ is false. Then, there exists at least one node $i\in \Delta$ such that $x_i^*<x_{max}^*$. But, this means that $x_k^*<x_{max}^*$ since  
\begin{equation*}
 \frac{\partial x_k}{\partial x_i}=\frac{\beta\delta r_{ki}}{(f_k + \delta)^2}\prod_{j\in \Delta \setminus \{i\}}{\left(1-\beta r_{kj}x_j\right)}>0
\end{equation*}
which contradicts our first statement that $x_k^*=x_{max}^*$. 
Let $n=\left|\Delta\right|$. From  (\ref{eq:xi-reactive}) and the fact that $x_j^*=x_{max}^*$ for all $j\in \Delta$, we have:
\begin{equation*}
 x_{max}^* = \frac{1 - \left(1 - \beta x_{max}^*\right)^{n}}{1 - \left(1 - \beta x_{max}^*\right)^{n} + \delta}
\end{equation*}
Since $\frac{\partial x_{max}^*}{\partial n} > 0$, the maximum value of $x_{max}^*$ is obtained at $n \to\infty$. 
Finally, the bound (\ref{eq:bound-reactive}) comes directly from 
 \begin{equation*}
 x_{max}^* = \lim_{n\to\infty} \frac{1 - \left(1 - \beta x_{max}^*\right)^{n}}{1 - \left(1 - \beta x_{max}^*\right)^{n} + \delta} = \frac{1}{1+\delta} 
\end{equation*}
\end{proof}

\begin{theorem}
\label{theorem-reactive-all}
Let $\Phi$ be the family of all possible simple and connected graphs $G=(V,E)$ with $\left| V \right| \geq 2$. 
Let $\mathbf{x}^*(G)=[x_1^* x_2^* \ldots x_N^*]$ be the stationary solution  (\ref{eq:xi-reactive}) different from the origin and let $i$ be an arbitrary node of the graph $G=(V,E)$, $i \in V$.
Let 
\begin{equation*} 
 u_i^n  =  \frac{1 - \prod_{j=1}^N (1 - \beta a_{ij} u_j^{n-1})}{1 - \prod_{j=1}^N (1 - \beta a_{ij} u_j^{n-1}) + \delta}
\end{equation*}
where
\begin{equation*}
u_i^0 = 1/(1+\delta).
\end{equation*}
Then $x_i^*$ is bounded by
\begin{equation*}
x_i^* < \ldots < u_i^{n} < \ldots < u_i^1 < u_i^0
\end{equation*}
for all $i$.
\end{theorem}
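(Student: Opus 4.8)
The plan is to prove, by induction on $n$, the two-term statement $x_i^* < u_i^{n} < u_i^{n-1}$ for every node $i$ and every $n\ge 1$; chaining these inequalities over $n$ yields the asserted chain $x_i^*<\dots<u_i^{n}<\dots<u_i^1<u_i^0$. The natural device is to package the recursion as a single operator: for $\mathbf{u}=(u_1,\dots,u_N)\in[0,1]^N$ put
$\Psi_i(\mathbf{u})=\dfrac{1-\prod_{j=1}^N(1-\beta a_{ij}u_j)}{1-\prod_{j=1}^N(1-\beta a_{ij}u_j)+\delta}$,
so that $\mathbf{u}^{n}=\Psi(\mathbf{u}^{n-1})$ and, by (\ref{eq:xi-reactive}), $\mathbf{x}^*$ is a fixed point, $x_i^*=\Psi_i(\mathbf{x}^*)$. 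Writing $g(y)=y/(y+\delta)$ (strictly increasing on $[0,\infty)$ since $\delta>0$), note $\Psi_i(\mathbf{u})=g\!\left(1-\prod_{j=1}^N(1-\beta a_{ij}u_j)\right)$.

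The proof rests on two elementary observations about $\Psi$ restricted to the box $B=[0,1/(1+\delta)]^N$. First, since $\beta\le 1<1+\delta$, for $\mathbf{u}\in B$ every factor $1-\beta a_{ij}u_j$ lies in $(0,1]$, so $\prod_{j=1}^N(1-\beta a_{ij}u_j)\in(0,1]$ and hence $0\le\Psi_i(\mathbf{u})=g(1-\prod)\le g(1)=1/(1+\delta)$, with $\Psi_i(\mathbf{u})=g(1)$ impossible because no factor can vanish. Thus $\Psi(B)\subseteq B$ and already $u_i^{n}<u_i^0$ for all $n\ge1$. Second, $\Psi$ is strictly monotone on $B$: if $\mathbf{u}\le\mathbf{v}$ componentwise with $u_{j_0}<v_{j_0}$ for some neighbour $j_0$ of $i$, then, because all factors stay strictly positive, $\prod_{j=1}^N(1-\beta a_{ij}v_j)<\prod_{j=1}^N(1-\beta a_{ij}u_j)$, so $\Psi_i(\mathbf{u})<\Psi_i(\mathbf{v})$; since $G$ is connected with $\ge2$ vertices, every node has at least one neighbour, so a strict componentwise inequality $\mathbf{u}<\mathbf{v}$ in $B$ forces $\Psi_i(\mathbf{u})<\Psi_i(\mathbf{v})$ for all $i$.

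Granting these, the induction is short. For the base case $n=1$: Lemma \ref{lemma-main} gives $x_i^*<u_i^0$ for all $i$, and $\mathbf{x}^*\in B$ (it is a probability vector, with the strict bound from Lemma \ref{lemma-main}); applying strict monotonicity to $\mathbf{x}^*<\mathbf{u}^0$ and using $\Psi_i(\mathbf{x}^*)=x_i^*$, $\Psi_i(\mathbf{u}^0)=u_i^1$ gives $x_i^*<u_i^1$, while $u_i^1<u_i^0$ by the first observation. For the inductive step, assume $x_i^*<u_i^{n}<u_i^{n-1}$ for all $i$ (a chain of strict componentwise inequalities inside $B$, valid since $u_i^{n}\le u_i^0=1/(1+\delta)$); applying $\Psi$ termwise yields $x_i^*=\Psi_i(\mathbf{x}^*)<\Psi_i(\mathbf{u}^{n})=u_i^{n+1}<\Psi_i(\mathbf{u}^{n-1})=u_i^{n}$, closing the induction.

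I expect the only genuine care to be in keeping every inequality strict rather than merely weak: one must verify that all factors $1-\beta a_{ij}u_j$ encountered along the iteration stay bounded away from $0$ (so that strictly shrinking one factor strictly shrinks the product) and that each node really has a neighbour at which the bound strictly decreases — both of which reduce to $u_j^{n}\le u_j^0=1/(1+\delta)$ together with connectivity of $G$. No delicate estimate is needed beyond these, and the same template, run with the inequalities reversed and a different seed, underlies the lower-bound statement Theorem \ref{theorem-reactive-all-1}.
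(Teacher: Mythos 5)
Your proof is correct and follows essentially the same route as the paper's: an induction driven by the (strict) monotonicity of the update map, seeded by Lemma \ref{lemma-main} for $x_i^*<u_i^0$ and by the self-map property for $u_i^1<u_i^0$. The only difference is presentational — you package the recursion as an operator on the box $[0,1/(1+\delta)]^N$ and verify strictness explicitly, where the paper argues via positive partial derivatives and obtains $u_i^1<u_i^0$ from the limit $k_i\to\infty$.
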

\begin{proof}
We will first prove that $u_i^n < \ldots < u_i^1 < u_i^0$ by induction. Note that $u_i^0=u^0$ is topology independent, and $u^0 = \lim_{k_i \to \infty} u_i^1$ and since $\frac{\partial u_i^1}{\partial k_i}>0$ we have that $u_i^1 < u^0$ for all $i$. Now assume that $u_i^p < u_i^{p-1}$ holds for all $i$ and $p=2,3,\dots,n-1$. Since $\frac{\partial u_i^n}{\partial u_j^{n-1}}>0$ and $u_j^{n-1} < u_j^{n-2}$ it follows that $u_i^{n} < u_i^{n-1}$ for all $i$. We will prove that $x_i^* < u_i^n$ in a similar fashion. From Lemma \ref{lemma-main} we have that $x_i^* < u_i^0=u^0$ for all $i$. Now assume that $x_i^* < u_i^p$ holds for all $i$ and $p=1,2,\dots,n-1$. Note that $x_i^*$ is $u_i^n$ with $u_j^{n-1}$ replaced by the smaller $x_j^*$  ($x_j^* < u_j^{n-1}$). Since $\frac{\partial x_i^*}{\partial x_j^*}>0$ it follows that $x_i^* < u_i^n$ for all $i$. 
\end{proof}  

\begin{theorem}
\label{theorem-reactive-all-1}
Let $\Phi$ be the family of all possible simple and connected graphs $G=(V,E)$ with $\left| V \right| \geq 2$. Let $\beta > \delta$.  
Let $\mathbf{x}^*(G)=[x_1^* x_2^* \ldots x_N^*]$ be the stationary solution  (\ref{eq:xi-reactive}) different from the origin and let $i$ be an arbitrary node of the graph $G=(V,E)$, $i \in V$.
Let 
\begin{equation*}
L_i^n  =  \frac{1 - \prod_{j=1}^N (1 - \beta a_{ij} L_j^{n-1})}{1 - \prod_{j=1}^N (1 - \beta a_{ij} L_j^{n-1}) + \delta} 
\end{equation*}
where
\begin{equation*}
L_i^0 = 1 - \delta/\beta.
\end{equation*}
Then $x_i^*$ is bounded by
\begin{equation*}
L_i^0 \leq L_i^1 \leq \ldots \leq L_i^{n} < \ldots \leq  x_i^* 
\end{equation*}
for all $i$.
\end{theorem}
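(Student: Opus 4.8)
The plan is to follow the proof of Theorem~\ref{theorem-reactive-all} line by line, but to run the monotone recursion \emph{upward} from the topology-independent floor $L_i^0=1-\delta/\beta$ rather than downward from the ceiling $u_i^0=1/(1+\delta)$. Write $L^0:=1-\delta/\beta$ for the (constant) value $L_i^0$. Three facts have to be assembled: (i) $L^0$ is a genuine lower bound, $L^0\le x_i^*$, for every coordinate of the non-origin stationary solution on \emph{every} $G\in\Phi$; (ii) the sequence $L_i^0\le L_i^1\le L_i^2\le\cdots$ is nondecreasing in $n$; and (iii) the inequality $L_i^n\le x_i^*$ is propagated by the recursion. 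Chaining (i)--(iii) gives the theorem, and the tightening $u_i^n\downarrow x_i^*$, $L_i^n\uparrow x_i^*$ then bracket $x_i^*$ from both sides.

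First I would settle the base case $L^0\le x_i^*$. Because $G$ is connected with $|V|\ge 2$ and $\mathbf x^*\ne \mathbf 0$, equation~(\ref{eq:xi-reactive}) forces every coordinate to be strictly positive (if $x_i^*=0$ then $\prod_j(1-\beta a_{ij}x_j^*)=1$, so all neighbours of $i$ vanish, and by connectedness $\mathbf x^*=\mathbf 0$). Let $k$ attain $x_{min}^*=\min_i x_i^*$. Each factor $1-\beta a_{kj}x_j^*$ lies in $[0,1]$ (since $\beta x_j^*\le\beta/(1+\delta)<1$ by Lemma~\ref{lemma-main}), so the product over the neighbours of $k$ is at most its largest factor, i.e. $\prod_j(1-\beta a_{kj}x_j^*)\le 1-\beta x_{min}^*$; monotonicity of $t\mapsto t/(t+\delta)$ then gives $x_{min}^*=x_k^*\ge \frac{\beta x_{min}^*}{\beta x_{min}^*+\delta}$, and dividing by $x_{min}^*>0$ yields $\beta x_{min}^*+\delta\ge\beta$, i.e. $x_{min}^*\ge 1-\delta/\beta=L^0$. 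This is exactly where the hypothesis $\beta>\delta$ is used: it is what makes $L^0=x_{min}^*$ the stationary value of $K_2$ nonnegative (for $\beta\le\delta$ the origin is the only fixed point by~(\ref{eq:stability-cond}) and the statement is vacuous), and it is what prevents the iteration from collapsing the way the trivial choice $L_i^0=0$ does.

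Next, the monotonicity (ii). The building block, as in the proof of Lemma~\ref{lemma-main}, is that the map $(t_j)\mapsto \frac{1-\prod_j(1-\beta a_{ij}t_j)}{1-\prod_j(1-\beta a_{ij}t_j)+\delta}$ has strictly positive partial derivative in each $t_j$ on $[0,1]^N$ (the same computation $\frac{\partial x_k}{\partial x_i}=\frac{\beta\delta}{(f_k+\delta)^2}\prod_{l\ne i}(1-\beta a_{kl}t_l)>0$, valid because all iterates stay in $[0,1/(1+\delta)]$ and $\beta<1+\delta$, so every factor is positive). For the base step $L_i^1\ge L_i^0$, substitute the constant $L^0$ into the recursion: $L_i^1=\frac{1-(1-\beta L^0)^{k_i}}{1-(1-\beta L^0)^{k_i}+\delta}$ with $\beta L^0=\beta-\delta\in(0,1)$; for $k_i=1$ this equals $L^0$, and for $k_i\ge 2$ one has $(1-\beta L^0)^{k_i}<1-\beta L^0$, making it strictly larger, so $L_i^1\ge L_i^0$ for all $i$. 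The inductive step is immediate: if $L_j^{p}\ge L_j^{p-1}$ for all $j$, the positive-partials property gives $L_i^{p+1}\ge L_i^{p}$ for all $i$.

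Finally, (iii): $L_i^n\le x_i^*$ by induction on $n$, with $n=0$ being the base case above. For the step, note that $x_i^*$ satisfies precisely the recursion that defines $L_i^n$, only with $L_j^{n-1}$ replaced by the pointwise-not-smaller $x_j^*$; the positive-partials property gives $L_i^n\le x_i^*$. Combining, $L_i^0\le L_i^1\le\cdots\le L_i^n\le\cdots\le x_i^*$, with every inequality strict except for the degenerate case $G=K_2$ (where $L_i^n=x_i^*=1-\delta/\beta$ for all $n$) and in the limit $n\to\infty$; strictness propagates as soon as the back-propagated value $L^0$ has reached a node of degree at least $2$, exactly as the analogous claim for $u_i^n$ in Theorem~\ref{theorem-reactive-all}. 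I expect the only genuinely delicate point to be the base case --- knowing both that $1-\delta/\beta$ is realized (by $K_2$, the content of the $\beta>\delta$ hypothesis) and that it sits \emph{below} every coordinate of every nonzero stationary solution; everything past that is the same positive-derivative bookkeeping used for the upper bounds.
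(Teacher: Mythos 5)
Your proof is correct, and it follows the route the paper itself intends: the paper gives no proof of Theorem~\ref{theorem-reactive-all-1} at all, merely asserting in the text that it is proved "using similar arguments" to Theorem~\ref{theorem-reactive-all} and justifying the base value $L_i^0=1-\delta/\beta$ only informally, by identifying it with the stationary value on the path graph of size $2$. Your three-step structure (base case, monotone ascent, propagation of the inequality through the positive-partial-derivative recursion) is the exact mirror of the paper's proof of the upper-bound theorem, so there is nothing genuinely different in the overall architecture. The one place where you add real value is the base case: your direct argument at the minimizing node --- bounding $\prod_j(1-\beta a_{kj}x_j^*)$ by $1-\beta x_{min}^*$, using monotonicity of $t\mapsto t/(t+\delta)$, and dividing by $x_{min}^*>0$ (justified by connectedness) to get $x_{min}^*\ge 1-\delta/\beta$ --- is a complete proof of the fact the paper only gestures at via an extremal-graph heuristic analogous to Lemma~\ref{lemma-main}, and it makes transparent exactly where the hypothesis $\beta>\delta$ enters. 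One cosmetic slip: the product of factors in $[0,1]$ is bounded above by \emph{each} factor (hence by the smallest), so saying it is "at most its largest factor" is a weaker but still sufficient statement; the chain of inequalities you draw from it is valid.
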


\begin{lemma}
\label{lemma-reactive-lower}
Let $G'=(V',E')$ be a subgraph of $G=(V,E)$. Let node $i \in V\cap V'$ and let $x_i^{*}$ and $x_i^{*'}$ be the stationary solution of (\ref{eq:xi-reactive}) different from the origin associated with the node $i$ for the graph $G$ and $G'$ respectively. Then $x_i^{*'} < x_i^{*}$.
\end{lemma}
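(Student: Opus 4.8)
The plan is to read (\ref{eq:xi-reactive}) as the fixed-point equation $x=g_G(x)$ of an order-preserving map and to compare $g_G$ with $g_{G'}$ by a monotone-iteration argument. Set $h_i(x)=1-\prod_{j}(1-\beta a_{ij}x_j)$ and $g_{G,i}(x)=h_i(x)/\bigl(h_i(x)+\delta\bigr)$, so the non-origin stationary solution is exactly $x^*=g_G(x^*)$; note that $h_i$, hence $g_{G,i}$, does not depend on $x_i$ because $a_{ii}=0$. First I would record two elementary facts. (i) $g_G$ is monotone on the box $B=[0,\tfrac{1}{1+\delta}]^N$ and maps $B$ into itself: $\partial g_{G,i}/\partial x_j=\frac{\delta}{(h_i+\delta)^2}\,\beta a_{ij}\prod_{k\neq j}(1-\beta a_{ik}x_k)\ge 0$, and the self-map property is the computation already behind $u_i^1<u_i^0$ in (\ref{eq:bound-reactive-1}). (ii) Passing to a subgraph only shrinks $g$: after replacing $G'$ by the connected component of $G'$ containing $i$ (which does not change $x^{*'}_i$) and padding it with isolated vertices so that both graphs live on $V$, we have $a'_{ij}\le a_{ij}$; each product in $h'_i$ therefore dominates the corresponding product in $h_i$, so $h'_i(x)\le h_i(x)$ and $g_{G',i}(x)\le g_{G,i}(x)$ for all $x\ge 0$. (The padded solution is $0$ on the new vertices and equals $x^{*'}$ on $V'$, so nothing is lost.)

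With this in place the comparison is routine. By Lemma \ref{lemma-main}, $x^*\in B$, and for connected supercritical $G$ every coordinate of $x^*$ is strictly positive (if $x^*_i=0$ then $h_i(x^*)=0$ forces $x^*=0$ at every neighbour of $i$, and connectivity propagates this down to the origin). Iterating $g_{G'}$ from $x^*$ gives $g_{G'}(x^*)\le g_G(x^*)=x^*$, so by monotonicity $y^{(n)}=g_{G'}^{\,n}(x^*)$ is nonincreasing and converges to a fixed point $\bar y\le x^*$ of $g_{G'}$. Iterating $g_{G'}$ upward from a sufficiently small positive vector (a small multiple of the Perron vector of the adjacency matrix of $G'$ works in the supercritical case) gives iterates increasing to the unique positive fixed point $x^{*'}$; since that starting vector is $\le x^*$, monotonicity gives $x^{*'}\le\bar y\le x^*$, i.e. $x^{*'}\le x^*$. (If $G'$ is sub- or critical there is no non-origin solution and the statement is vacuous against $x^*>0$.) This is exactly the convergence mechanism already invoked for $u_i^n\to x_i^*$ after Theorem \ref{theorem-reactive-all}.

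The remaining point — strict inequality — is the part I expect to require the most care. Since $G$ is connected and $G'$ a proper connected subgraph, $G'$ contains a node $a$ that in $G$ is incident to an edge absent from $G'$ (a deleted edge of $G'$ itself, or a $G$-edge leaving $V'$); otherwise $V'=V$ and $E'=E$. At such an $a$ the product defining $h'_a(x^*)$ is missing a factor $(1-\beta x^*_b)<1$ — here $x^*_b>0$ and $\beta>0$ — so $h'_a(x^*)<h_a(x^*)$ strictly and $x^{*'}_a\le g_{G',a}(x^*)<g_{G,a}(x^*)=x^*_a$. Finally the set $S=\{v\in V':\ x^{*'}_v=x^*_v\}$ is closed under passing to $G'$-neighbours: if $v\in S$ then $g_{G',v}(x^{*'})=g_{G',v}(x^*)$, and since $x^{*'}\le x^*$ and $g_{G',v}$ is strictly increasing in each neighbour coordinate (no factor $1-\beta a'_{vk}x^*_k$ vanishes, as $x^*_k<1$), equality forces $x^{*'}_w=x^*_w$ at every $G'$-neighbour $w$ of $v$. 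As $G'$ is connected and $a\notin S$, we get $S=\emptyset$, hence $x^{*'}_i<x^*_i$, which is the claim.
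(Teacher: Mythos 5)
Your proof is correct, and at its core it rests on the same mechanism as the paper's own argument: the stationary equation defines an order-preserving map, passing to a subgraph can only decrease that map, and the decrease propagates through the network. The paper's proof is a one-edge-at-a-time, informal version of this: it deletes a single edge, reads the deletion as setting $x_j$ to $0$ from node $i$'s viewpoint, invokes the positive partial derivative computed inside Lemma \ref{lemma-main} to get a negative change in $x_i$, and asserts that this change ``propagates'' under iteration of (\ref{eq:xi-reactive}). What your write-up adds is precisely the rigor the paper omits. First, you verify that $g_{G'}\le g_G$ on the invariant box and run the iteration downward from $x^*$ to obtain a fixed point $\bar y\le x^*$ of $g_{G'}$; second, and crucially, you run a second iteration upward from a small positive subsolution to identify the non-origin fixed point of the subgraph and conclude $x^{*'}\le\bar y\le x^*$ --- the paper never checks that its propagated perturbation actually converges to $x^{*'}$ rather than to the origin or to some other limit. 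Third, the paper does not address strictness at all, while your argument (the set $S$ of equality nodes is closed under $G'$-neighbours in a connected $G'$, yet excludes the node $a$ incident to a deleted edge, hence $S=\emptyset$) delivers the strict inequality the lemma actually asserts. The only caveats are ones you already flag and the paper leaves tacit: $G'$ must be a proper subgraph, and the connected component of $G'$ containing $i$ must itself be supercritical for $x^{*'}$ to be well defined; with those understood, your proof is a complete and correct replacement for the paper's sketch.
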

\begin{proof}
 Without loss of generality, assume that only one edge $e$ between nodes $i$ and $j$ is removed from $G$ in order to obtain $G'$. Starting from the stationary solution $\textbf{x}^{*}(G)$, and from node $i$'s point of view, the edge removal can be interpreted as a change in $x_j$ from $x_j^*$ to $0$. Then, from Lemma \ref{lemma-main} we have a negative change in $x_i$ which will propagate and imply negative changes in $x_k$ for all $k \in V'$ with each iteration of (\ref{eq:xi-reactive}). On the other hand, removing a node can be interpreted as a removal of its edges.
\end{proof}

\begin{theorem}
\label{theorem-reactive-lower-all}
Consider an arbitrary node $i$ of the graph $G=(V,E)$ and let $G_i^p=(V_i^p,E_i^p)$ be the $p$-hop neighborhood of $G$ extracted by starting at node $i$. Let $n_i = \max_{x} l(i,x)$ and let $\mathbf{x}^*(G)=[x_1^* x_2^* \ldots x_N^*]$ and $\mathbf{x}(G_i^p)^*=[l_1^p l_2^p \ldots l_{n}^p]$ be the stationary solution of (\ref{eq:xi-reactive}) different from the origin for the graphs $G=(V,E)$ and $G_i^p=(V_i^p,E_i^p)$, respectively.  
Then $x_i^*$ is bounded by
\begin{equation*}
l_i^1 < l_i^2 < \ldots < l_i^{n_i+1} = x_i^*
\end{equation*}
for all $i\in V$.
\end{theorem}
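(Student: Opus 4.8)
The plan is to avoid re-deriving anything about the fixed-point map and instead obtain the whole chain from the subgraph-monotonicity already established in Lemma \ref{lemma-reactive-lower}, after checking one purely combinatorial fact: the $p$-hop neighborhoods of $i$ form an increasing nest of connected subgraphs, all containing $i$, whose last member is $G$ itself.

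First I would verify the nesting $G_i^1\subseteq G_i^2\subseteq\cdots\subseteq G_i^{n_i+1}$. From the definitions, $V_i^p\subseteq V_i^{p+1}$ is immediate, and if $(x,y)\in E_i^p$ then $x\in V_i^p\subseteq V_i^{p+1}$ and $y\in V_i^{p-1}\subseteq V_i^p$, so $(x,y)\in E_i^{p+1}$; hence $E_i^p\subseteq E_i^{p+1}$. Also $i\in V_i^0\subseteq V_i^p$ for every $p$, and tracing a shortest path from $i$ to any vertex at distance $\le p$ shows each $G_i^p$ is connected, so Lemma \ref{lemma-reactive-lower} is applicable to every consecutive pair; since each inclusion $G_i^p\subsetneq G_i^{p+1}$ (for $p\le n_i$) actually adds a vertex or an edge, that lemma yields the \emph{strict} inequality $l_i^p<l_i^{p+1}$. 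Next I would pin down the top of the nest: because $n_i=\max_x l(i,x)$, every vertex of $G$ has distance $\le n_i$ from $i$, so $V_i^{n_i+1}=V$; and every edge of $G$ joins two vertices of $V_i^{n_i}=V_i^{(n_i+1)-1}$, so it lies in $E_i^{n_i+1}$, giving $E_i^{n_i+1}=E$. Hence $G_i^{n_i+1}=G$, and by the very definition of $l_i^{n_i+1}$ as the $i$-th coordinate of the stationary solution of (\ref{eq:xi-reactive}) on $G_i^{n_i+1}$ we get $l_i^{n_i+1}=x_i^*$. Concatenating the step inequalities with this identification gives $l_i^1<l_i^2<\cdots<l_i^{n_i+1}=x_i^*$.

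The one point that needs care — and the only real obstacle — is that a non-trivial stationary solution of (\ref{eq:xi-reactive}) must actually exist on each $G_i^p$ for the symbol $l_i^p$ to denote what we want. By (\ref{eq:stability-cond}) this requires $\beta/\delta>1/\lambda_{1,A(G_i^p)}$, and since $\lambda_{1,A(G_i^p)}$ is nondecreasing along the nest there is a threshold index $p_0$ below which the origin is the only fixed point; for $p<p_0$ one reads $l_i^p=0$, which is still a legitimate lower bound (the inequality to $l_i^{p_0}$ then being non-strict), so the strict chain is understood to begin at $p_0$. Equivalently one restricts, as the paper does elsewhere, to processes that remain supercritical on the neighborhoods considered. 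Apart from this bookkeeping, the proof is nothing more than iterating Lemma \ref{lemma-reactive-lower} up the nested neighborhoods.
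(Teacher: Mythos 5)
Your proof is correct and follows essentially the same route as the paper, whose entire argument is the one-line observation that the chain $l_i^{p-1}<l_i^p$ follows by iterating Lemma \ref{lemma-reactive-lower} over the nested $p$-hop neighborhoods. Your added verification of the nesting $G_i^p\subseteq G_i^{p+1}$, the identification $G_i^{n_i+1}=G$, and the caveat about the existence of a nontrivial stationary solution on possibly subcritical small neighborhoods are details the paper leaves implicit, but they do not change the approach.
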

\begin{proof}
  $l_i^{p-1} < l_i^{p}$ for all $p=2,3,\dots, n_i+1$ comes directly from Lemma \ref{lemma-reactive-lower}.
\end{proof}

\section{Bounds for reactive process}
\label{app-contact}
\begin{lemma}
\label{lemma-contact}
Let $\Phi$ be the family of all possible simple unweighted connected graphs. Let $\mathbf{x}^*(G)=[x_1^* x_2^* \ldots x_n^*]$ be the stationary solution (\ref{eq:xi-contact}) different from the origin for the graph $G \in \Phi$. Let $x_{min}^*=\min_{G\in \Phi}\min_i{\mathbf{x}^*(G)}$ and $x_{max}^*=\max_{G\in \Phi}\max_i{\mathbf{x}^*(G)}$. Let $\Gamma$ and $\Delta$ be the set of neighbors of the node associated with the value $x_{min}^*$ and $x_{max}^*$ respectively.  Then $\left|\Gamma\right|\to\infty$ and $\left|\Delta\right|=1$.
\end{lemma}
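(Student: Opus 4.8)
The proof splits into the two assertions, and both rest on one elementary fact: for $c\in(0,1)$ the sequence $n\mapsto(1-c/n)^n$ is strictly increasing in the integer $n\ge1$ and converges to $e^{-c}$ from below. Writing $g_n(x)=1-(1-\beta x/n)^n$ and $\Phi_n(x)=g_n(x)/(g_n(x)+\delta)$, this makes $\Phi_n(x)$ \emph{strictly decreasing} in $n$ for every $x\in(0,1)$, with largest term $\Phi_1(x)=\beta x/(\beta x+\delta)$ and infimum $\Phi_\infty(x)=(1-e^{-\beta x})/(1-e^{-\beta x}+\delta)$. Both $\Phi_1$ and $\Phi_\infty$ are concave and increasing and vanish at $0$, so each has a unique positive fixed point — $x_u:=1-\delta/\beta$ for $\Phi_1$ and $x_p$, the positive solution of $x=(1-e^{-\beta x})/(1-e^{-\beta x}+\delta)$ from (\ref{eq:bound-contact-1}), for $\Phi_\infty$ — and moreover $\Phi_1(x)>x$ on $(0,x_u)$ and $\Phi_\infty(x)>x$ on $(0,x_p)$, with $x_p<x_u$. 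Here I use only that the process is endemic, i.e. $\beta>\beta_c=\delta$, together with $0<\delta<1$ and $0<\beta\le1$ (so that $\beta x^*<1\le k_i$ for every degree $k_i$).

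For $x_{max}^*$, suppose a node $k$ of some $G\in\Phi$ has $x_k^*(G)=x_{max}^*$; I claim $k_k=1$. Every neighbour satisfies $x_j^*\le x_{max}^*$, so $\prod_{j\sim k}(1-\beta x_j^*/k_k)\ge(1-\beta x_{max}^*/k_k)^{k_k}$, and since $t\mapsto t/(t+\delta)$ is increasing this gives $x_{max}^*=x_k^*\le\Phi_{k_k}(x_{max}^*)$. If $k_k\ge2$ then $\Phi_{k_k}(x_{max}^*)<\Phi_1(x_{max}^*)$, hence $x_{max}^*<\Phi_1(x_{max}^*)$, which forces $x_{max}^*<x_u=1-\delta/\beta$. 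But the path graph on two vertices lies in $\Phi$ and, by symmetry, both of its degree-one nodes sit at the stationary value $1-\delta/\beta$, so $x_{max}^*\ge1-\delta/\beta$ — a contradiction. Therefore $k_k=1$, i.e. $|\Delta|=1$; the two-vertex path also shows $x_{max}^*=1-\delta/\beta$, matching the right-hand bound of (\ref{eq:bound-contact}).

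For $x_{min}^*$, let $k$ realise $m:=\min_i x_i^*(G)$ in some $G\in\Phi$, with finite degree $k_k$. Every neighbour has $x_j^*\ge m$, so $\prod_{j\sim k}(1-\beta x_j^*/k_k)\le(1-\beta m/k_k)^{k_k}<e^{-\beta m}$, the last inequality strict because $k_k<\infty$; hence $m=x_k^*>\Phi_\infty(m)$. Since $\Phi_\infty(x)>x$ throughout $(0,x_p)$ and $\Phi_\infty(x_p)=x_p$, this forces $m>x_p$, so $x_{min}^*(G)>x_p$ for every graph whose minimiser has finite degree. On the other hand the complete graphs $K_N\in\Phi$ have common stationary value solving $x=\Phi_{N-1}(x)$, which decreases to $x_p$ as $N\to\infty$ while the degree of the (every) node is $N-1$. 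Thus $x_{min}^*=\inf=x_p$ is approached only along graphs in which the minimiser's degree diverges: $|\Gamma|\to\infty$.

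The ingredients $(1-c/n)^n\uparrow e^{-c}$, concavity of $\Phi_1$ and $\Phi_\infty$ (hence uniqueness of their positive fixed points and the sign of $\Phi(x)-x$ below them), and $\Phi_{N-1}\to\Phi_\infty$ with fixed points converging, are all standard and I would simply quote them. The genuinely new point relative to Lemma \ref{lemma-main} — where the reactive process has $r_{ij}=a_{ij}$ with no normalisation — is that for the contact process the degree $k_k$ appears \emph{both} in the exponent and in the denominator of the self-consistency map, and it is precisely the strict monotonicity of $\Phi_n$ in $n$ that converts ``how many neighbours the extremal node has'' into the two sharp comparisons above, pinning the maximiser to a single edge ($K_2$) and the minimiser to the infinite-degree limit ($K_\infty$). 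An alternative route closer to Lemma \ref{lemma-main} first uses $\partial x_k/\partial x_j>0$ to show that all neighbours of an extremal node share its value, then invokes connectivity to propagate this to the whole graph, then uses the distinctness of the fixed points of $\Phi_n$ for different $n$ to force that graph to be regular, and is left optimising $x=\Phi_d(x)$ over the common degree $d\ge1$ — again decided by monotonicity of $\Phi_n$. Either way, establishing that strict monotonicity carefully (in particular the case $n=1$, where one needs $\beta m<1$) is the main thing to get right.
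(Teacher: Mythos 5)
Your proof is correct, and it reaches the same pivot as the paper --- the strict monotonicity in $n$ of the one-variable self-consistency map $\Phi_n$, which pins the maximiser to degree $1$ and the minimiser to the infinite-degree limit --- but it gets to that one-variable reduction by a different and more careful route. The paper imports the argument of Lemma \ref{lemma-main} wholesale: it asserts that all neighbours of the extremal node share the extremal value (via $\partial x_k/\partial x_j>0$), writes the resulting exact fixed-point equation $x_{lim}^*=\Phi_n(x_{lim}^*)$, and then differentiates the fixed point with respect to $n$. You instead use only the one-sided comparisons $x_j^*\le x_{max}^*$ (resp.\ $x_j^*\ge x_{min}^*$) to get the inequalities $x_{max}^*\le\Phi_{k_k}(x_{max}^*)$ and $m>\Phi_\infty(m)$, and close the argument with explicit witness graphs: $K_2$ attains $1-\delta/\beta$, and $K_N$ drives the common value down to $x_p$ as $N\to\infty$. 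What your version buys is rigour at exactly the two points where the paper is loose: you do not need the extremal configuration to be realised with all neighbours at the extremal value, and you handle the fact that $x_{min}^*$ is an infimum over an infinite family that is never attained by a finite graph (so ``$|\Gamma|\to\infty$'' is correctly read as a statement about the minimising sequence). You also correctly flag, and your second paragraph implicitly proves, the content of Lemma \ref{lemma-contact2} ($x_i^*\le 1-\delta/\beta$ for every node of every graph). The one thing to make sure you state explicitly if you write this up is the strict monotonicity of $n\mapsto(1-\beta x/n)^n$ on integers $n\ge1$ for $\beta x\in(0,1)$ together with $\beta x^*<1$, which you already identify as the load-bearing elementary fact; with that quoted, the argument is complete.
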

\begin{proof}
 Let $x_{lim}^*$ be either $x_{min}^*$ or $x_{max}^*$ and let the number of its neighbours be $n$. From Lemma \ref{lemma-main} we have:
\begin{equation*}
 x_{lim}^* = \frac{1 - \left(1 - \frac{\beta x_{lim}^*}{n}\right)^{n}}{1 - \left(1 - \frac{\beta x_{lim}^*}{n}\right)^{n} + \delta}
\end{equation*}
Since $\frac{\partial x_{lim}^*}{\partial n} < 0$, the minimum value of $x_{lim}^*$ is obtained at $n=\left|\Gamma\right|\to\infty$ and the maximum at $n=\left|\Delta\right|=1$ since the graph must be connected.
\end{proof}

\begin{lemma}
\label{lemma-contact2}
Let $\mathbf{x}^*(G)=[x_1^* x_2^* \ldots x_N^*]$ be the stationary solution of (\ref{eq:xi-contact}) different from the origin. Then $x_i^*$ is bounded by
\begin{equation*}
l_i^0 \equiv \frac{1 - e^{-\beta x_i^*}}{1 - e^{-\beta x_i^*} + \delta} < x_i^* \leq 1 - \frac{\delta}{\beta} \equiv u_i^0
\end{equation*}
for all $i$.
\end{lemma}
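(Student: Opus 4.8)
\emph{Proof strategy.} The plan is to deduce both inequalities from the graph–independent extremal values $x_{min}^*$ and $x_{max}^*$ furnished by Lemma~\ref{lemma-contact}. The convenient first move is to rewrite the stationary equation (\ref{eq:xi-contact}) as
\begin{equation*}
x_i^* = \frac{f_i}{f_i+\delta}, \qquad f_i := 1 - \prod_{j=1}^N\Bigl(1-\tfrac{\beta a_{ij}x_j^*}{k_i}\Bigr), \qquad k_i := \textstyle\sum_k a_{ik},
\end{equation*}
and to note that $t\mapsto t/(t+\delta)$ is strictly increasing on $[0,\infty)$; hence it suffices to control $x_i^*$ (equivalently $f_i$) by the corresponding quantities for the extremal graphs.

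For the upper bound I would invoke Lemma~\ref{lemma-contact}, which forces the node realizing the supremum $x_{max}^*$ to have exactly one neighbour, $|\Delta|=1$. Rerunning the contradiction argument of Lemma~\ref{lemma-main} shows that this neighbour must itself carry the value $x_{max}^*$, so (with $k_i=1$, hence $r_{ij}=1$) the value $x_{max}^*$ solves $y = \beta y/(\beta y+\delta)$; discarding the root $y=0$ gives $x_{max}^* = 1-\delta/\beta$. Since $x_i^* \le x_{max}^*$ for every node of every graph, including $G$ itself, this yields $x_i^* \le 1-\delta/\beta = u_i^0$, with equality only when $G$ is the path graph of size $2$.

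The lower bound is the delicate half, because $l_i^0$ is itself a function of the unknown $x_i^*$: writing $g(y) := (1-e^{-\beta y})/(1-e^{-\beta y}+\delta)$, the claim is the self-referential inequality $g(x_i^*) < x_i^*$. I would first record that $g$ is the composition of the concave increasing map $u\mapsto u/(u+\delta)$ with the concave increasing map $y\mapsto 1-e^{-\beta y}$, hence concave and increasing on $[0,\infty)$, with $g(0)=0$ and $g'(0)=\beta/\delta$; and since a stationary solution different from the origin exists only when $\beta>\beta_c=\delta$ (the threshold (\ref{eq:stability-cond}) with $\lambda_{1,R}=1$ for the contact process), one has $g'(0)>1$. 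For a concave function vanishing at the origin the ratio $g(y)/y$ is nonincreasing — strictly decreasing here — so $g$ has a unique positive fixed point $y_0$, with $g(y)>y$ on $(0,y_0)$ and $g(y)<y$ for every $y>y_0$. Lemma~\ref{lemma-contact} identifies the global minimum $x_{min}^*$ as the limit of the one–parameter family $y = \frac{1-(1-\beta y/n)^n}{1-(1-\beta y/n)^n+\delta}$ as $n\to\infty$, i.e. as a root of $y=g(y)$, so $x_{min}^* = y_0$; moreover this infimum is approached but never attained by a finite graph, so every node of $G$ satisfies $x_i^* > x_{min}^* = y_0$. Hence $g(x_i^*) < x_i^*$, which is precisely $l_i^0 < x_i^*$.

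The crux, and the only place that needs care, is this last step — turning the self-referential bound into something checkable. The naive route, applying $1-t\le e^{-t}$ (or log-concavity of $t\mapsto\log(1-\beta t/k_i)$ together with AM--GM) to obtain $f_i \ge 1-e^{-\beta\bar x_i}$ and thus $x_i^* \ge g(\bar x_i)$ with $\bar x_i$ the mean of the neighbours' stationary probabilities, does not close, since $\bar x_i$ need not dominate $x_i^*$. Routing the comparison instead through the topology-independent value $x_{min}^*$ together with the monotonicity $g(y)<y$ for $y>x_{min}^*$ sidesteps this obstacle, and the same extremal bookkeeping simultaneously delivers the (much easier) upper bound, so both parts of (\ref{eq:bound-contact}) come out together.
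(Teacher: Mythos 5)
Your proposal is correct and follows essentially the same route as the paper: both halves are reduced via Lemma~\ref{lemma-contact} to the extremal equations for $x_{max}^*$ (solved by $1-\delta/\beta$) and $x_{min}^*$ (the positive fixed point of $g(y)=\frac{1-e^{-\beta y}}{1-e^{-\beta y}+\delta}$). The only difference is that you make explicit the step the paper leaves implicit — using concavity of $g$, $g(0)=0$ and $g'(0)=\beta/\delta>1$ to conclude $g(y)<y$ for all $y$ above the fixed point, hence $l_i^0=g(x_i^*)<x_i^*$ — which is a welcome clarification rather than a different argument.
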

\begin{proof}
 From Lemma \ref{lemma-contact} the bounds come directly from the solution of the equations
 \begin{equation*}
 x_{min}^* = \lim_{n\to\infty} \frac{1 - \left(1 - \frac{\beta x_{min}^*}{n}\right)^{n}}{1 - \left(1 - \frac{\beta x_{min}^*}{n}\right)^{n} + \delta}
\end{equation*}
 \begin{equation*}
 x_{max}^* = \frac{\beta x_{max}^*}{\beta x_{max}^* + \delta}
\end{equation*}
\end{proof}

\begin{theorem}
\label{theorem-contact-all}
Let $\mathbf{x}^*(G)=[x_1^* x_2^* \ldots x_N^*]$ be the stationary solution of (\ref{eq:xi-contact}) different from the origin and let
\begin{eqnarray*}
 l_i^n  =  \frac{1 - \prod_{j=1}^N (1 - \beta r_{ij} l_j^{n-1})}{1 - \prod_{j=1}^N (1 - \beta r_{ij} l_j^{n-1}) + \delta} \mbox{ and } \\
 u_i^n  =  \frac{1 - \prod_{j=1}^N (1 - \beta r_{ij} u_j^{n-1})}{1 - \prod_{j=1}^N (1 - \beta r_{ij} u_j^{n-1}) + \delta}
\end{eqnarray*}
where
\begin{equation*}
l_i^0 = \frac{1 - e^{-\beta x_i^*}}{1 - e^{-\beta x_i^*} + \delta} \mbox{ and } u_i^0 = 1 - \frac{\delta}{\beta}
\end{equation*}
then $x_i^*$ is bounded by
\begin{equation*}
l_i^0 < l_i^1 < \ldots < l_i^n  <  x_i^* \leq u_i^n \leq \ldots \leq u_i^1 \leq u_i^0
\end{equation*}
for all $i$.
\end{theorem}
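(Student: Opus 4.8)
The plan is to recast the recursion as iteration of a single monotone operator and then propagate the two anchor inequalities of Lemmas~\ref{lemma-contact} and~\ref{lemma-contact2} through it, in complete parallel with the reactive case (Theorems~\ref{theorem-reactive-all} and~\ref{theorem-reactive-all-1}). Set $G(z)=z/(z+\delta)$, which is strictly increasing on $[0,\infty)$, and define $\Psi=(\Psi_1,\dots,\Psi_N)$ on $[0,1]^N$ by $\Psi_i(\mathbf y)=G\!\left(1-\prod_{j=1}^N(1-\beta r_{ij}y_j)\right)$. Two elementary facts are needed: (i) $\Psi$ is coordinatewise nondecreasing, because $1-\prod_j(1-\beta r_{ij}y_j)$ increases in each $y_j$ and $G$ is increasing, with the dependence on $y_j$ strict whenever $a_{ij}=1$ and the arguments lie in $[0,1)$ -- this is the sign computation $\partial\Psi_i/\partial y_j>0$ already used above; and (ii) by~(\ref{eq:xi-contact}), $\mathbf x^*$ is exactly the fixed point of $\Psi$. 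With this notation the bounds are defined by $\mathbf l^n=\Psi(\mathbf l^{n-1})$ and $\mathbf u^n=\Psi(\mathbf u^{n-1})$, and Lemmas~\ref{lemma-contact}--\ref{lemma-contact2} supply the anchors $\mathbf l^0<\mathbf x^*\le\mathbf u^0$ (componentwise).

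Next I would install the two one-step inequalities that seed the induction. For the upper side I claim $\mathbf u^1=\Psi(\mathbf u^0)\le\mathbf u^0$: since $R$ is row-stochastic, $\sum_j\beta r_{ij}u^0=\beta u^0$, so the elementary bound $\prod_j(1-a_j)\ge 1-\sum_j a_j$ (with $a_j=\beta r_{ij}u^0\in[0,1]$) gives $1-\prod_j(1-\beta r_{ij}u^0)\le\beta u^0$, hence $\Psi_i(\mathbf u^0)\le G(\beta u^0)=\beta u^0/(\beta u^0+\delta)=u^0$, where the last equality uses that $u^0=1-\delta/\beta$ is the nonzero fixed point of $z\mapsto\beta z/(\beta z+\delta)$; equality can occur (e.g. on the path of size $2$), which is why the upper half of the chain is stated with $\le$. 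For the lower side I need the dual statement $\mathbf l^0<\Psi(\mathbf l^0)=\mathbf l^1$, that is, that $\mathbf l^0$ is a strict subsolution of the fixed-point equation; I treat this below as the main obstacle.

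Granting the seeds, the rest is a two-sided monotone iteration. Upper side: assuming inductively $\mathbf u^{n-1}\le\mathbf u^{n-2}$ and $\mathbf x^*\le\mathbf u^{n-1}$, monotonicity of $\Psi$ gives $\mathbf u^n=\Psi(\mathbf u^{n-1})\le\Psi(\mathbf u^{n-2})=\mathbf u^{n-1}$ and $\mathbf x^*=\Psi(\mathbf x^*)\le\Psi(\mathbf u^{n-1})=\mathbf u^n$. Lower side: assuming $\mathbf l^{n-2}<\mathbf l^{n-1}$ and $\mathbf l^{n-1}<\mathbf x^*$, the strict-in-neighbours part of monotonicity (each $i$ has a neighbour since $G$ is connected with $\left|V\right|\ge2$) yields $\mathbf l^{n-1}<\mathbf l^n$ and $\mathbf l^n<\mathbf x^*$; note that $\mathbf l^1<\mathbf x^*$ already follows this way from $\mathbf l^0<\mathbf x^*$, so the only genuinely new input at the base is $\mathbf l^0<\mathbf l^1$. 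Concatenating the two chains with the anchors gives $l_i^0<l_i^1<\dots<l_i^n<x_i^*\le u_i^n\le\dots\le u_i^1\le u_i^0$ for every $i$.

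The hard part is therefore the lower seed $\mathbf l^0<\Psi(\mathbf l^0)$. In the reactive case the analogous step was transparent because $u^0=1/(1+\delta)=\lim_{k\to\infty}u_i^1$; here, by contrast, $l_i^0=(1-e^{-\beta x_i^*})/(1-e^{-\beta x_i^*}+\delta)$ is defined through $x_i^*$ itself and, by Lemma~\ref{lemma-contact}, coincides with the per-node value only in the infinite-degree (complete-graph-like) limit, lying strictly below it for a finite graph. The gain of one iteration has to be extracted from the strict inequality $(1-\beta y/k)^k<e^{-\beta y}$ at the actual finite degrees $k_i$, combined with $l_j^0<x_j^*$ and the exact stationary identity $1-\prod_j(1-\beta r_{ij}x_j^*)=\delta x_i^*/(1-x_i^*)$; the crude estimate $\prod_j(1-\beta r_{ij}l_j^0)\le e^{-\beta\sum_j r_{ij}l_j^0}$ is too lossy to suffice (it already fails on the path of size $2$), so this is the one point where the specific structure of the contact process encoded in Lemma~\ref{lemma-contact} is really used. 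Everything else is the routine monotone bookkeeping already exhibited for the reactive process.
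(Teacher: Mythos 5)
Your skeleton --- recasting the recursion as iteration of the coordinatewise monotone operator $\Psi_i(\mathbf y)=G\bigl(1-\prod_j(1-\beta r_{ij}y_j)\bigr)$ with $G(z)=z/(z+\delta)$, seeding it with the anchors from Lemmas \ref{lemma-contact} and \ref{lemma-contact2}, and propagating by induction --- is exactly the paper's route; the paper's entire proof is the sentence ``completely analogous to Theorem \ref{theorem-reactive-all}'', and your upper half (row-stochasticity plus the Weierstrass inequality $\prod_j(1-a_j)\ge 1-\sum_j a_j$ to get $\Psi(\mathbf u^0)\le\mathbf u^0$, then monotone descent) is a correct and complete instantiation of that analogy, including the correct observation that equality forces the $\le$ signs.

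However, your proof is not complete: you explicitly flag the lower seed $\mathbf l^0<\Psi(\mathbf l^0)$ as ``the main obstacle,'' describe which ingredients a proof would have to combine, and then stop. Without that seed the entire lower chain $l_i^0<l_i^1<\dots<x_i^*$ collapses (only $l_i^n<x_i^*$ survives, via $\mathbf l^0<\mathbf x^*$ and monotonicity), so this is a genuine gap, not a cosmetic one. Your diagnosis of why it is delicate is accurate: with the literal definition $l_i^0=G(1-e^{-\beta x_i^*})$ the inequality $\prod_j(1-\beta r_{ij}l_j^0)<e^{-\beta x_i^*}$ can be arbitrarily close to equality (e.g.\ for the hub of a large star), and the exponential majorization of the product reduces it to $\sum_j r_{ij}l_j^0\ge x_i^*$, which is false already on the two-node path. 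Two remarks to help you close it. First, the paper itself does not supply this step either: the ``analogy'' with Theorem \ref{theorem-reactive-all} genuinely breaks here, because the reactive seeds $u^0$ and $L^0$ are topology-independent constants while $l_i^0$ as written depends on the unknown $x_i^*$. Second, the analogy is restored --- and the seed becomes a one-line computation --- if one reads $l^0$ as the topology-independent constant $x_p$ solving $x=G(1-e^{-\beta x})$ (equivalently $x_{\min}^*$ of Lemma \ref{lemma-contact}, which is what Eq.~(\ref{eq:bound-contact-1}) actually feeds into $l_i^1$): then $l_i^1=G\bigl(1-(1-\beta x_p/k_i)^{k_i}\bigr)>G(1-e^{-\beta x_p})=x_p=l^0$ follows from $(1-t/k)^k<e^{-t}$, precisely mirroring the reactive argument. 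If you insist on the literal $l_i^0=G(1-e^{-\beta x_i^*})$, you must supply the finite-degree argument you sketch; as it stands, that step is asserted, not proved.
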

\begin{proof}
The proof is completely analogous to that of Theorem \ref{theorem-reactive-all}.
\end{proof} 

\section{Analytical solution for the contact process in the limit $\beta \to \beta_c$}
\label{app-limit}

When $\beta \rightarrow \beta_{c}$  (but $\beta > \beta_{c}$) then the probability $x_i^*$ that node $i$ is infective is $x_i^* \approx \varepsilon_i$, where $0 \leq \varepsilon_i \ll 1$, and from (\ref{eq:xi-stat}) (neglecting second order terms in $\varepsilon$) one gets
\begin{equation}
\delta \varepsilon_i = (1-\varepsilon_i) \beta \sum_j r_{ij} \varepsilon_j.
\label{eq:xi-epsilon}
\end{equation}
Let $y = [\varepsilon_1 \ldots \varepsilon_N]$ and $D_y=[d_{ij}]$ be a diagonal matrix such that $d_{ii} = 1 - \varepsilon_i$ and $d_{ij}=0$ for $i \neq j$.  The last equation can be written in matrix form as 
$$
\frac{\delta}{\beta} y = D_y R y, 
$$
or 
$$
\left[ D_y R - \frac{\delta}{\beta} I_N   \right] y =0.
$$
Assuming that $\varepsilon_i \neq 0$ for all $i$, the last equation reduces to 
$ D_y R - \frac{\delta}{\beta} I_N =0$, which, since $\sum_j r_{ij}= 1$, has a solution $\varepsilon_i = 1 - {\delta}/{\beta}$ for all $i$. 
%
Therefore, when $\beta > \beta_{c}= \delta$ and the nodes' probabilities $x_i^*$ of being infective are small, the $x_i^*$'s  have an analytical solution in closed form, they are no longer topology dependent, and are functions only of the spreading process parameters $\beta$ and $\delta$.

\end{document}